\newcommand{\M}{\ensuremath{M}\xspace}
\newcommand{\N}{\ensuremath{N}\xspace}
\newcommand{\trail}{\ensuremath{T}\xspace}
\newcommand{\cycle}{\ensuremath{C}\xspace}
\newcommand{\base}{\ensuremath{B}\xspace}
\newcommand{\symdiff}{\ensuremath{\mathop{\bigtriangleup}}}
\newcommand{\degree}[2]{\ensuremath{\operatorname{\delta}_{#1}(#2)}\xspace}
\newcommand{\incident}[2]{\ensuremath{\operatorname{I}_{#1}(#2)}\xspace}
\newcommand{\uline}[1]{\overset{#1}{\ \rule[3pt]{1em}{0.5pt}\ }} %
\newcommand{\edge}{\ensuremath{e}\xspace}
\newcommand{\nat}{\ensuremath{\mathbb{N}}\xspace}
\newcommand{\dcs}{\ensuremath{\ensuremath{ab\textsc{-DCS}}}\xspace}
\newcommand{\stdcsconn}{\ensuremath{st}\text{-\textsc{DCSConn}}\xspace}
\newcommand{\atight}{$a$-tight\xspace}
\newcommand{\btight}{$b$-tight\xspace}
\newcommand{\abtight}{$ab$-tight\xspace}
\newcommand{\abfixed}{$ab$-fixed\xspace}
\newcommand{\abconstrained}{$ab$-con\-strain\-ed\xspace}
\newcommand{\yes}{\textsc{Yes}\xspace}
\newcommand{\no}{\textsc{No}\xspace}
\newcommand{\instance}{\ensuremath{\mathcal{I}}\xspace}
\newcommand{\mfixed}{\M-fixed\xspace}
\newcommand{\NotA}{\operatorname{\textsc{NotA}}\xspace}
\newcommand{\NotB}{\operatorname{\textsc{NotB}}\xspace}
\newcommand{\Even}{\operatorname{\textsc{Even}}\xspace}
\newcommand{\mfixedalgo}{\textsc{M-FixedSubgraph}\xspace}
\newcommand{\atdecomp}{\textsc{AlternatingTrailDecomposition}\xspace}
\newtheorem{theorem}{Theorem}
\newtheorem{lemma}{Lemma}
\newtheorem{definition}{Definition}
\newtheorem{proposition}{Proposition}
\newtheorem{mainthm*}{Theorem}
\title{Degree-constrained Subgraph \\Reconfiguration is in P}
\author{Moritz M\"uhlenthaler\thanks{Research funded in parts by the School of
Engineering of the University of Erlangen-Nuremberg.}}
\begin{document}
\maketitle
\begin{abstract}
The degree-constrained subgraph problem asks for a subgraph of a given graph
such that the degree of each vertex is within some specified bounds.  We study the
following reconfiguration variant of this problem: Given two solutions to a
degree-constrained subgraph instance, can we transform one solution into the
other by adding and removing individual edges, such that each intermediate
subgraph satisfies the degree constraints and contains at least a certain minimum
number of edges?  This problem is a generalization of the matching
reconfiguration problem, which is known to be in P. We show that even in the
more general setting the reconfiguration problem is in P.
\end{abstract}

\section{Introduction}

A reconfiguration problem asks whether a given solution to a combinatorial
problem can be transformed into another given solution in a step-by-step
fashion such that each intermediate solution is ``proper'', where the
definition of proper depends on the problem at hand. For instance, in the
context of vertex coloring reconfiguration, ``step-by-step'' typically means that the
color of a single vertex is changed at a time, and ``proper'' has the usual
meaning in the graph coloring context. An issue of particular interest is the
relation between the complexity of the underlying combinatorial problem and its
reconfiguration variant. This complexity relation has been studied for
classical combinatorial problems including for example graph coloring,
satisfiability, matching, and the shortest path
problem~\cite{Ito:11,Bonsma:13,Cereceda:11}. Surprisingly, the reconfiguration
variants of some tractable problems turn out to be intractable~\cite{Ito:11},
and vice versa~\cite{Wrochna:15}. An overview of recent results on
reconfiguration problems can be found in~\cite{Heuvel:13}.

In this work we investigate the complexity of the reconfiguration problem
associated with the $(a,b)$-degree-constrained subgraph (\dcs) problem. Let $G
= (V, E)$ be a graph and let  $a, b: V \rightarrow \nat$ be two functions
called \emph{degree bounds} such that for each vertex $v$ of $G$ we have
$0 \leq a(v) \leq b(v) \leq \degree{G}{v}$, where $\degree{G}{v}$ denotes the
degree of $v$ in $G$. The task is to decide if there is a subgraph $S$ of $G$
that satisfies the degree constraints, that is, for each vertex $v$ of $S$,
$\degree{S}{v}$ is required to be at least $a(v)$ and at
most $b(v)$. Typically, the intention is to find among all subgraphs of $G$
that satisfy the degree constraints one with the greatest number of edges. This
problem is a generalization of the classical maximum matching problem and can
be solved in polynomial time by a combinatorial
algorithm~\cite{Shiloach:81,Gabow:83}. The \dcs reconfiguration problem is
defined as follows:
\begin{definition}{(\stdcsconn)}
\mbox{}\newline
\emph{INSTANCE:} An \dcs instance, source and target solutions \M, \N, an integer $k \geq 1$. 
\newline
\emph{QUESTION:} Is it possible to transform \M into \N by adding/removing a
single edge in each step such that each intermediate subgraph satisfies the
degree constraints and contains at least $\min\{|E(\M)|,|E(\N)|\}-k$ edges?
\end{definition}

Our main result is the following
\begin{mainthm*}
  \stdcsconn can be solved in polynomial time.
\end{mainthm*}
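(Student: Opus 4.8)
The plan is to reduce \stdcsconn to a scheduling problem on the symmetric difference of the two given solutions. First I would argue that every intermediate subgraph may be assumed to lie between $M \cap N$ and $M \cup N$: the common edges $B = M \cap N$ never need to be touched, and there is never a reason to introduce an edge outside $M \cup N$. Reconfiguring $M$ into $N$ then amounts to removing each edge of $S^- = M \setminus N$ and adding each edge of $S^+ = N \setminus M$ one at a time. Writing the running edge count as $|E(M)|$ plus (additions so far) minus (removals so far), and reducing by time-reversal to the case $|E(M)| \le |E(N)|$, the budget constraint becomes the clean requirement that at every moment the number of completed removals exceeds the number of completed additions by at most $k$. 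Before anything else I would record an easy infeasibility certificate: at an \abfixed vertex $v$ (one with $a(v)=b(v)$) the degree $\delta_S(v)$ is frozen, so no incident edge can ever be toggled; hence if any edge of $M \symdiff N$ is incident to such a vertex the answer is \no.

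Next I would focus on the degree constraints. At each vertex the degree follows a $\pm 1$ walk from $\delta_M(v)$ to $\delta_N(v)$ (a $+1$ per incident addition, a $-1$ per incident removal) that must stay inside $[a(v),b(v)]$; locally this is always possible, so the only real difficulty is making the per-vertex orderings globally consistent. The tool for this is the \atdecomp: pair, at each vertex, its incident $S^-$-edges with its incident $S^+$-edges as far as possible, so that $H = M \symdiff N$ decomposes into alternating trails (whose endpoints are exactly the vertices with $\delta_M(v) \ne \delta_N(v)$) and alternating cycles. The basic primitive is to \emph{slide} a trail, toggling its edges in order so that each internal vertex first loses its $S^-$-edge and then gains its $S^+$-edge, or vice versa, keeping that vertex's degree essentially constant. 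Whether a given internal vertex may be crossed with a temporary $+1$ or only with a temporary $-1$ is governed precisely by whether it is \btight or \atight, so the admissible sliding directions of a trail are dictated by the tightness pattern along it.

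With sliding as the primitive I would turn to the budget. A trail that has an end at an excess-$S^+$ vertex can be started with an addition, building up slack, whereas cycles and trails between excess-$S^-$ ends must be started with a removal and therefore temporarily spend budget; a cycle, having no loose end, costs an extra unit because one edge must be removed just to open it. The schedule then interleaves budget-building and budget-spending pieces so that the running surplus of removals over additions never exceeds $k$, and deciding whether such an interleaving exists (and computing the minimum feasible $k$) is a matter of greedily preferring addition-started pieces and accounting for the unavoidable dips. Crucially the \atdecomp is not unique---the pairing at each vertex is a choice---so I would optimize over pairings, choosing one that minimizes the forced budget; I expect this optimization to reduce to a matching or flow computation over the incident-edge pairings, yielding an optimal partition in polynomial time.

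The hard part will be the coupling of these three ingredients. The trails and cycles of the decomposition are not vertex-disjoint, so sliding one piece perturbs the degrees available to another, and the tightness-imposed sliding directions can conflict with the order in which the budget forces pieces to be processed. The delicate case is a small budget together with many \atight or \btight vertices, where the freedom to reorder operations is nearly exhausted; here I would need to show that the matching/flow characterization of the optimal decomposition exactly captures feasibility, that is, that whenever the budget suffices for the optimal partition a concrete valid step-by-step sequence can be produced, and otherwise no decomposition can succeed. Making this equivalence precise---and justifying that a single fixed subgraph of edges derived from the \atdecomp may be treated as immovable while the remainder is rescheduled---is, I expect, where the real work lies.
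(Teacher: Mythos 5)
The central gap is your opening reduction: you assume every intermediate subgraph may be taken to lie between $\M \cap \N$ and $\M \cup \N$, so that ``there is never a reason to introduce an edge outside $\M \cup \N$.'' For \abconstrained subgraphs this is false, and the instances it misclassifies are exactly the hard ones. Consider an even $(\M,\N)$-alternating cycle $\cycle$ in $\M \symdiff \N$ all of whose vertices are \btight in \M: internally it is only 2-reconfigurable, since an edge must first be removed to ``open'' it; yet for $k=1$ it may still be reconfigurable by first sliding an even-length \M-alternating trail that leaves $\M \cup \N$ entirely, making some cycle vertex not \btight, then reconfiguring \cycle, then undoing the detour (Lemma~\ref{lemma:btight1reconfig}, generalizing Ito et al.). Worse, a cycle that is alternatingly \abtight in \M admits \emph{no} internal move at all---no edge of \cycle can be added or removed without violating a degree bound---yet it can be $k$-reconfigurable for every $k \geq 1$ provided some \abconstrained $\M'$ with $\cycle \cap \M' = \cycle \cap \M$ breaks the tightness pattern (Lemma~\ref{lemma:altabreconfig}). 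Your framework would declare every such instance \no. Relatedly, your \no-certificate is too weak: incidence to an \abfixed vertex is only the seed of the obstruction, and one must compute a closure (Algorithm~\ref{alg:mfixed}: e.g.\ if $\degree{\M}{v} = b(v)$ and every \M-edge at $v$ is already fixed, then \emph{all} edges at $v$ are fixed), so there are \no-instances with no edge of $\M \symdiff \N$ incident to an \abfixed vertex.

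The second problem is the budget machinery you defer to a matching/flow optimization over pairings. No such optimization is needed, and the tradeoff you posit does not exist: the paper shows that if \M is not 2-reconfigurable to \N then it is not $k$-reconfigurable for any $k \geq 2$, so the only meaningful distinction is $k=1$ versus $k \geq 2$, and what decides $k=1$ is not the choice of decomposition but the external-reconfiguration characterizations above together with a simple greedy rule---process \M-augmenting trails before \N-augmenting trails and \btight cycles (Algorithm~\ref{alg:atd}), which keeps every intermediate size at least $\min\{|E(\M)|,|E(\N)|\}-1$. Moreover the tightness conflicts between overlapping pieces, which you correctly identify as ``where the real work lies,'' are resolved not by rescheduling but by the exact characterization of internally 1-reconfigurable trails (Lemma~\ref{lemma:em1reconfig}: endpoint conditions for open trails; not \atight, not \btight, not alternatingly \abtight for cycles) and its recursive subtrail decomposition. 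So the proposal's two load-bearing commitments---internality of all moves, and a flow-based optimal decomposition---are respectively unsound and superfluous, and the step you leave open is the actual content of the theorem.
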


It was shown by Ito et al.~in~\cite[Proposition 2]{Ito:11} that the analogous matching
reconfiguration problem can be solved in polynomial time. According to our result,
the reconfiguration problem remains tractable even in the more general \dcs
reconfiguration setting. The proof of the main result essentially contains an
algorithm that determines a suitable sequence of edge additions/removals if one
exists. The number of reconfiguration
steps is bounded by $O(|E|^2)$. The algorithm also provides a certificate for \no-instances.

\section{Notation}

In this paper we deal with subgraphs of some simple graph $G = (V,
E)$, which is provided by \dcs a problem instance. The subgraphs of concern are induced by
subsets of $E$. For notational convenience, we identify these
subgraphs with the subsets of $E$ and can therefore use standard set-theoretic
notation ($\cap$, $\cup$, $-$) for binary operations on the subgraphs. Let $H$
and $K$ be two subgraphs of $G$, denoted by $H, K \subseteq G$. By $E(H)$ we
refer explicitly to the set of edges of the graph $H$. We write $H + K$ for
the union of $H$ and $K$ if they are disjoint. By $H \symdiff K$ we denote the
\emph{symmetric difference} of $H$ and $K$, that is $H \symdiff K :=
(H - K) + (K - H)$.  If $e$ is an edge of $G$ we may write $H + e$ and $H - e$
as shorthands for $H + \{e\}$ and $H - \{e\}$, respectively. We denote the
degree of a vertex $v$ of $H$ by $\degree{H}{v}$. A walk $v_0 \uline{\edge_0}
\ldots \uline{\edge_{t-1}} v_t$ in $G$ is a \emph{trail} if $\edge_0,
\ldots, \edge_{t-1}$ are distinct. The vertices $v_0$ and $v_t$ are called
\emph{end vertices}, all other vertices are called \emph{interior}.  A trail
without an edge is called \emph{empty}. A non-empty trail is \emph{closed} if
its end vertices agree, otherwise it is \emph{open}. A closed trail is also
called a \emph{cycle}. In a slight abuse of notation we will sometimes consider
trails in $G$ simply as subgraphs of $G$ and combine them with other
subgraphs using the notation introduced above.  A trail is called \emph{$(K,
H)$-alternating} if its edges, in the order given by the trail, are
alternatingly chosen from $K - H$ and $H - K$. An odd-length $(K,H)$-alternating
trail \trail is called $K$-augmenting if $|E(K \symdiff \trail)| > |E(K)|$. 

Let $G$ be the graph and $a,b: V \rightarrow \nat$ be the degree bounds of an
\dcs instance. A subgraph $\M \subseteq G$ that satisfies the degree constraints 
is called \emph{\abconstrained}. A vertex $v$ of $\M$ is called
\emph{\atight} (\emph{\btight}) \emph{in \M} if $\degree{M}{v} = a(v)$
($\degree{\M}{v} = b(v)$). A vertex is called \emph{\abfixed in $\M$} if it is
both \atight and \btight in $M$.  We say that $\M$ is \atight (\btight) if each
vertex of $\M$ is \atight (\btight).  A closed $(\M,\N)$-alternating trail
$\trail = v_0 \uline{} \ldots \uline{} v_t$ of even length is called
\emph{alternatingly \abtight} in \M if for each $i$, $0 \leq i \leq t$, $v_i$
is \atight iff $i$ is even and \btight iff $i$ is odd, or vice versa.

\section{$ab$-Constrained Subgraph Reconfiguration}

Throughout this section, we assume that we are given some \stdcsconn instance 
$(G, \M, \N, a, b, k)$, where $G = (V, E)$ is a graph, $a, b: V \rightarrow \nat$
are degree bounds, $\M, \N \subseteq G$ are \abconstrained, and $k \geq 1$. A
\emph{reconfiguration step}, or \emph{move}, adds/removes an edge to/from a
subgraph.  Given an \M-edge $e$ and an \N-edge $e'$, an \emph{elementary move}
on \M yields $\M - e + e'$, either by adding $e'$ after removing $e$ or vice
versa.  \M is \emph{$k$-reconfigurable} to \N if there is a sequence of
reconfiguration moves that transforms \M into \N such that each intermediate
subgraph respects the degree constraints and contains at least $\min\{|E(\M)|,
|E(\N)|\} - k$ edges.  Clearly, if \M is $k$-reconfigurable to \N then \M is also
$k'$-reconfigurable to \N for any $k' > k$.  \M is \emph{internally
$k$-reconfigurable} to \N if it is $k$-reconfigurable under the additional
restriction that each intermediate subgraph is contained in $\M \symdiff \N$.
If \M is not internally $k$-reconfigurable to \N but still $k$-reconfigurable
to \N then we say that \M is externally $k$-reconfigurable to \N.

The general procedure for deciding if \M is $k$-reconfigurable to \N is the
following: First, we check for the presence of obstructions that render a
reconfiguration impossible. If it turns out that reconfiguration is still
possible we reconfigure $(\M,\N)$-alternating trails in $\M \symdiff \N$, one
by one, until we either finish successfully or we obtain a certificate for
\M not being $k$-reconfigurable to \N. Curiously, it turns out that if \M is not
2-reconfigurable to \N then \M is not $k$-reconfigurable to \N for any $k \geq
2$.

\subsection{Obstructions}
\label{sec:obstructions}

When transforming \M into \N, certain parts of \M may be ``fixed'' and
therefore make a proper reconfiguration impossible.  Similar obstructions occur
for example in vertex coloring reconfiguration, where certain vertices are
fixed in the sense that their color cannot be changed~\cite{Cereceda:11}. In
our case we identify a certain subgraph of $G$ which depends on \M and \N and
cannot be changed at all.

Let $v$ be an \abfixed vertex of $G$. When
reconfiguring $M$ to $N$, no $M$-edge incident to an \abfixed vertex can be removed
and no $(G-M)$-edge incident to an \abfixed vertex can be added during a
reconfiguration process without violating the degree constraints.  Hence an
edge is fixed if it is incident to an \abfixed vertex. 
However, we may identify larger parts of $G$ that are fixed due to the given
subgraph $M$ and the degree bounds. 
If we consider each \abfixed edge to be
\mfixed, then we can identify further \M-fixed edges based on the following
observations: First if a vertex $v$ is incident to exactly $b(v)$ \M-edges and
each of them is \mfixed then all edges incident to $v$ are \mfixed.  Similarly,
if $v$ is incident to exactly $a(v)$ $M$-edges and each $(G-M)$-edge incident
to $v$ is \mfixed then each edge incident to $v$ is \mfixed.
Algorithm~\ref{alg:mfixed} shows how to identify an \mfixed subgraph of $G$
based on these observations. By $\incident{M}{v}$ we denote the set of \M-edges
incident to the vertex $v$. Some bookkeeping could be employed to speed things
up, but it is not necessary for our argument.

\begin{algorithm}
\caption{\mfixedalgo}
\label{alg:mfixed}
% python style
%\SetStartEndCondition{ }{}{}%
%\SetKwProg{Fn}{def}{\string:}{}
%\SetKwFunction{Range}{range}%%
%\SetKw{KwTo}{to}%
%\SetKwFor{For}{for}{\string:}{}%
%\SetKwFor{ForEach}{foreach}{\string:}{}%
%\SetKwIF{If}{ElseIf}{Else}{if}{ then}{else if}{else:}{}%
%\SetKwFor{While}{while}{ do}{fintq}%
%\AlgoDontDisplayBlockMarkers\SetAlgoNoLine\SetAlgoNoEnd%\SetAlgoVlined%
\DontPrintSemicolon
\SetKwInOut{Input}{input}
\SetKwInOut{InOut}{in/out}
\SetKwInOut{Output}{output}
\SetKwData{KLIST}{$K$}
\Input{\dcs instance $(G, a, b)$, \abconstrained $M \subseteq G$}
\Output{$M$-fixed subgraph $F \subseteq G$}
\SetKwData{FF}{$F$}
\SetKwData{GG}{$F'$}
%\BlankLine\;
$\FF \longleftarrow \emptyset;\qquad \GG \longleftarrow \{ \edge \in G \mid \edge \text{ is fixed} \}$\;
\While{ $|\GG| > |\FF|$ }
{
  $\FF \longleftarrow \GG$\;
  \uIf{$\degree{M}{v} = b(v)$ and $\incident{M}{v} \subseteq \FF$}{$\GG \longleftarrow \GG \cup \incident{G}{v}$}
  \uIf{$\degree{M}{v} = a(v)$ and $\incident{G-M}{v} \subseteq \FF$}{$\GG \longleftarrow \GG \cup \incident{G}{v}$}
}
\Return \FF\;
\end{algorithm}

\begin{proposition}
  Let $M, N$ be \abconstrained subgraphs of $G$ and let $F
  \subseteq G$ be $M$-fixed. If $(M \symdiff N) \cap F$ is non-empty then \M is
  not $k$-reconfigurable to $N$ for any $k \geq 1$. 
  \label{prop:mfixed}
\end{proposition}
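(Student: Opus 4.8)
The plan is to show that \mfixed edges act as immovable constraints: in every valid reconfiguration sequence starting from \M, each edge of $F$ keeps its membership with respect to \M at every step. Granting this, the proposition is immediate, since an edge $e \in (\M \symdiff \N) \cap F$ lies in exactly one of \M and \N, so transforming \M into \N would require toggling $e$ at some step, which an $F$-edge never permits. Because any $k$-reconfiguration sequence keeps every intermediate subgraph \abconstrained and I will use only this property (never the edge-count bound), the conclusion will hold for every $k \geq 1$ at once.

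To prove this invariance, I would take an arbitrary sequence $\M = M_0, M_1, \ldots, M_\ell = \N$ of \abconstrained subgraphs in which consecutive subgraphs differ by a single edge, and argue by contradiction. Suppose some edge of $F$ is toggled along the sequence, and let $t$ be the \emph{first} step at which this occurs, say toggling $e^* \in F$. By the minimality of $t$, no edge of $F$ is touched before step $t$, so $M_{t-1}$ and \M agree on all of $F$; in particular, for any vertex, its incident $F$-edges have the same status in $M_{t-1}$ as in \M. The aim is then to show that toggling $e^*$ forces $M_t$ to violate a degree bound at an endpoint of $e^*$, contradicting that $M_t$ is \abconstrained.

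The core is a case distinction on the first iteration of Algorithm~\ref{alg:mfixed} at which $e^*$ enters $F$; this simultaneously identifies the responsible vertex $v$ and the \M-status of $e^*$. If $e^*$ enters during initialization, it is incident to an \abfixed vertex $v$, so $a(v) = b(v)$; all edges at $v$ lie in $F$ and agree with \M in $M_{t-1}$, giving $\degree{M_{t-1}}{v} = \degree{\M}{v} = a(v) = b(v)$, and toggling $e^*$ changes this value by one, leaving the single-point interval $[a(v),b(v)]$. If $e^*$ enters via the \btight rule at $v$, then $\degree{\M}{v} = b(v)$ and $\incident{\M}{v} \subseteq F$ held \emph{before} this iteration; hence $e^*$ is not itself an \M-edge at $v$ (else it would already be in $F$), so $e^* \in \incident{G-\M}{v}$. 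Since every \M-edge at $v$ then lies in $F$ and is present in $M_{t-1}$, we get $\degree{M_{t-1}}{v} \geq b(v)$, forcing $\degree{M_{t-1}}{v} = b(v)$ by the upper bound, and adding $e^*$ overshoots $b(v)$. The \atight rule is symmetric: $e^*$ must be an \M-edge at $v$, every non-\M-edge at $v$ is absent in $M_{t-1}$ so $\degree{M_{t-1}}{v} = a(v)$, and removing $e^*$ drops below $a(v)$. In each case $M_t$ violates a degree bound, the desired contradiction.

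I expect the main obstacle to be the bookkeeping in the two rule cases: one must argue carefully that the precondition $\incident{\M}{v} \subseteq F$ (respectively $\incident{G-\M}{v} \subseteq F$) together with the first-toggle minimality pins down both the \M-status of $e^*$ and the exact value of $\degree{M_{t-1}}{v}$. The clean observation that makes this work is that it suffices to look at the \emph{first} iteration at which $e^*$ is added: at that moment the precondition edges already lie in $F$, and since they are untouched before step $t$ they still match \M in $M_{t-1}$. Consequently no separate induction over the construction layers of Algorithm~\ref{alg:mfixed} is needed beyond this single remark, and the invariance --- hence the proposition --- follows.
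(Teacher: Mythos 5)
Your proof is correct and takes essentially the same route as the paper, which states Proposition~\ref{prop:mfixed} without a formal proof and relies on precisely the observations you formalize: edges incident to \abfixed vertices can never be toggled, and the two propagation rules of Algorithm~\ref{alg:mfixed} preserve this. Your first-toggle minimality argument is a complete and sound write-up of that sketch, and you correctly observe that only the degree constraints (never the edge-count bound) are used, so the conclusion holds uniformly for all $k \geq 1$.
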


That is, any \M-fixed edge in $(M \symdiff N)$ is a \no-certificate.
As a consequence, we can check if $M$ and $N$ agree on the subgraph $F
\subseteq G$ found by Algorithm~\ref{alg:mfixed} as a preprocessing step. At
this point in particular, but also later on it will be convenient to consider
\emph{subinstances} of a given \stdcsconn instance $\instance = (G, M, N, a,
b,k)$. If $H \subseteq G$ then the corresponding subinstance $\instance_H$ is the
instance $(H, M \cap H , N \cap H, a_H, b_H,k)$, where
\begin{eqnarray*}
  a_H(v) & = & \max \{0, a(v) - \degree{(G-H) \cap M}{v} \}\\
  b_H(v) & = & b(v) - \degree{(G-H) \cap M}{v}\enspace.
\end{eqnarray*}
\begin{proposition}
  If $(\M \symdiff \N) \cap F = \emptyset$ then \instance is a \yes-instance if and only if
  $\instance_{G-F}$ is a \yes instance.
  \label{prop:subinstance}
\end{proposition}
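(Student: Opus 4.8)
The plan is to show that the edges of $F$ are completely inert: in \emph{any} sequence of valid reconfiguration moves starting from \M, not a single edge of $F$ is ever added or removed. Granting this, a reconfiguration sequence for \instance and one for $\instance_{G-F}$ differ only by the constant ``fixed part'' $M \cap F$, and the two notions of validity (the degree constraints and the edge-count lower bound) match up exactly once the bounds are shifted as in the definition of $a_{G-F}$ and $b_{G-F}$. The hypothesis $(\M \symdiff \N) \cap F = \emptyset$ enters precisely to guarantee that \M and \N carry the same fixed part, i.e.\ $M \cap F = N \cap F$, so that the two endpoints really do correspond.

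First I would prove the inertness claim as a lemma: if $F$ is \mfixed and $\M = M_0, M_1, \dots, M_\ell$ is any sequence of \abconstrained subgraphs in which consecutive terms differ by a single edge, then $M_i \cap F = M \cap F$ for every $i$. I expect this to be the main obstacle, and I would prove it by induction following the order in which Algorithm~\ref{alg:mfixed} inserts edges into $F$. For the base case, an edge incident to an \abfixed vertex $v$ satisfies $a(v) = b(v) = \degree{M}{v}$; since adding or removing an edge at $v$ would change $\degree{M_i}{v}$ by one, while every $M_i$ must keep $\degree{M_i}{v} = a(v)$, no edge at $v$ is ever touched. For the inductive step I would treat the two rules separately. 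Under the first rule the $b(v)$ many \M-edges at $v$ are already known to be inert, hence present throughout, forcing $\degree{M_i}{v} = b(v)$; so no further edge at $v$ can ever be added and the newly marked edges are inert as well. Under the second rule the $(G-M)$-edges at $v$ are already known to be inert, hence never added, so only the $a(v)$ many \M-edges at $v$ can ever be present; as each $M_i$ needs $\degree{M_i}{v} \ge a(v)$, all of them must stay present, and again every edge at $v$ is inert.

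With the lemma in hand the rest is bookkeeping. For the forward direction, given a valid reconfiguration $\M = M_0, \dots, M_\ell = \N$ for \instance, I would set $M_i' := M_i - F$. By the lemma each $M_i \cap F = M \cap F$ is constant, so consecutive $M_i'$ still differ by the same single edge (which necessarily lies in $G - F$), and $M_0' = M \cap (G-F)$, $M_\ell' = N \cap (G - F)$. Since $\degree{M_i'}{v} = \degree{M_i}{v} - \degree{M \cap F}{v}$, the constraints $a(v) \le \degree{M_i}{v} \le b(v)$ translate verbatim into $a_{G-F}(v) \le \degree{M_i'}{v} \le b_{G-F}(v)$, and $|E(M_i')| = |E(M_i)| - |E(M \cap F)|$ shifts the edge-count threshold by the same constant; using $|E(M \cap F)| = |E(N \cap F)|$ (the point where $(\M\symdiff\N)\cap F = \emptyset$ is used) the threshold $\min\{|E(\M)|,|E(\N)|\}-k$ becomes exactly $\min\{|E(M_0')|,|E(M_\ell')|\}-k$. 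Hence $\instance_{G-F}$ is a \yes-instance. For the reverse direction I would run the same correspondence backwards, setting $M_i := M_i' + (M \cap F)$; the endpoint check $M_\ell = \N$ again uses $M \cap F = N \cap F$, and the degree and edge-count verifications are the identical computations read in the opposite direction.
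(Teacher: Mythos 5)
Your proof is correct and follows exactly the route the paper intends: the paper states this proposition without a written proof, remarking only that fixedness of $F$ under reconfiguration ``should be immediate,'' and your inertness lemma (induction along the marking order of Algorithm~1, using that a single-edge move changes a vertex degree by one) is precisely the missing justification, with the remainder being the same bound-shifting bookkeeping built into the definition of the subinstance $\instance_{G-F}$. The one place to be careful is that your argument reads the degree constraints as applying to every vertex of $G$ (so that, e.g., the last $\M$-edge at a vertex with $a(v)=1$ cannot be dropped), which is the reading the paper itself relies on for Proposition~1, so your proof is consistent with it.
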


The graph $G-F$ does not have any fixed vertices and hence the $(M-F)$-fixed
subgraph produced by Algorithm~\ref{alg:mfixed} is empty.  Removing the \mfixed
subgraph of $G$ in a preprocessing step will considerably simplify our
arguments later on. It should be immediate that no fixed edges can be
introduced by reconfiguring an \abconstrained subgraph.

\subsection{Internal Alternating Trail Reconfiguration}

The next Lemma is our fundamental tool for reconfiguring alternating trails in
$\M \symdiff \N$. For any such trail \trail, it provides necessary and
sufficient conditions for $\trail \cap \M$ being internally 1-reconfigurable to
$\trail \cap \N$ by performing only elementary moves. Behind the scenes, \trail is recursively divided into
subtrails which need to be reconfigured in a certain order. If successful, the
reconfiguration procedure performs exactly $|E(\trail)|$ edge additions/removals.

\begin{lemma}
\label{lemma:em1reconfig}
Let $\trail = v_0 \uline{\edge_0} \ldots \uline{\edge_{t-1}} v_t$ be a $(\M, \N)$-alternating
trail of even length in $\M \symdiff \N$. Then $\trail \cap \M$ is internally
1-reconfigurable to $\trail \cap \N$ using only elementary moves if
and only if \trail it satisfies each of the following conditions:
\begin{enumerate}
	\item \trail contains no \abfixed vertex.\label{itm:reconf:fixed}
	\item If \trail is open then $v_0$ is not \atight and $v_t$ is not \btight in \M.\label{itm:reconf:open}
	\item If \trail is closed then each of the following is true:\label{itm:reconf:closed} 
		\begin{enumerate}
			\item \trail is not \btight in \M \label{itm:reconf:btight}
			\item \trail is not \atight in \M \label{itm:reconf:atight}
			\item \trail is not alternatingly \abtight in \M\label{itm:reconf:abtight}
		\end{enumerate}
\end{enumerate}
\end{lemma}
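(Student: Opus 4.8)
The plan is to track, move by move, how the degree of each vertex of \trail changes, exploiting the fact that an elementary move consists of two consecutive single-edge steps and therefore passes through an intermediate subgraph in which only one of its two edges has been flipped. Because every edge touched lies in \trail, and because a balanced remove/add move changes the total number of edges by at most one (so the count never drops below $\min\{|E(\trail\cap\M)|,|E(\trail\cap\N)|\}-1$), both the internal and the $1$-reconfigurability requirements hold automatically; the whole content of the lemma is thus to keep every intermediate degree inside $[a(v),b(v)]$.

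For necessity I would treat the conditions one at a time. If \trail contains an \abfixed vertex $v$ then the degree of $v$ must stay equal to $a(v)=b(v)$ throughout, yet $v$ is incident to a trail edge that must change and the first single step touching $v$ alters its degree by one -- a contradiction, so condition~\ref{itm:reconf:fixed} is forced. For an open \trail the target $\M\symdiff\trail$ has $\degree{\M\symdiff\trail}{v_0}=\degree{\M}{v_0}-1$ and $\degree{\M\symdiff\trail}{v_t}=\degree{\M}{v_t}+1$, so $v_0$ cannot be \atight and $v_t$ cannot be \btight, giving condition~\ref{itm:reconf:open}. For a closed \trail I would inspect the first elementary move, which removes some \M-edge $e$ and adds some \N-edge $e'$: if \trail is \btight both endpoints of $e'$ are pushed above their bound unless $e$ is parallel to $e'$, which is impossible in the simple graph $G$; the \atight case is symmetric; and if \trail is alternatingly \abtight the \atight endpoint of $e$ and the \btight endpoint of $e'$ each need compensation, which again forces $e=e'$. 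Hence conditions~\ref{itm:reconf:btight}--\ref{itm:reconf:abtight} are necessary.

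For sufficiency I would induct on $|E(\trail)|$. The base case is a trail $v_0\uline{\edge_0}v_1\uline{\edge_1}v_2$ of length two: the single move removing $\edge_0$ and adding $\edge_1$ is executed add-first if $v_1$ is \atight and remove-first if $v_1$ is \btight (not both, by condition~\ref{itm:reconf:fixed}), while condition~\ref{itm:reconf:open} keeps $v_0$ and $v_2$ safe. For an open \trail of length at least four I would split off the length-two prefix at the even interior vertex $v_2$, writing $\trailS=v_0\uline{\edge_0}v_1\uline{\edge_1}v_2$ and $\trailT=v_2\ldots v_t$. Since $v_2$ is not \abfixed it is not \atight or not \btight; in the first case I reconfigure \trailT first and in the second \trailS first. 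The key point is that after the first piece is reconfigured the shared vertex $v_2$ has moved strictly off the relevant bound, so the second piece still satisfies conditions~\ref{itm:reconf:fixed}--\ref{itm:reconf:open} and the induction hypothesis applies.

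For a closed \trail I would first perform one opening move: a local swap at a vertex $v$ that removes $v$'s \M-edge and adds $v$'s \N-edge, ordered according to the tightness of $v$. This turns the cycle into an open trail on the remaining edges whose endpoints are the two neighbours of $v$; since the swap lowered the \M-neighbour and raised the \N-neighbour, those endpoints automatically meet condition~\ref{itm:reconf:open}, so the open-trail procedure finishes the reconfiguration. Such a move is valid exactly when the \M-neighbour of $v$ is not \atight and the \N-neighbour of $v$ is not \btight, so everything reduces to showing that some $v$ with this property exists whenever conditions~\ref{itm:reconf:btight}--\ref{itm:reconf:abtight} hold. I expect this to be the main obstacle: I would rewrite ``no vertex admits an opening move'' as a system of implications around the cycle -- each vertex forcing an \atight/\btight label on one of its neighbours -- and check that the only cyclic label patterns consistent with these implications are the fully \atight, fully \btight, and the two alternating ones, i.e.\ precisely the configurations ruled out by conditions~\ref{itm:reconf:btight}, \ref{itm:reconf:atight}, and \ref{itm:reconf:abtight}. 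A secondary difficulty is that \trail need not be vertex-simple: a vertex visited several times carries several trail edges simultaneously, so the clean ``degree changes by one'' bookkeeping at a split or opening vertex must be replaced by an argument that performs the decomposition at a repeated vertex and releases its several degree contributions in a provably safe order.
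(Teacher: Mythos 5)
Your necessity argument and the length-two base case are sound and match the paper's reasoning. The sufficiency direction, however, has a genuine gap at exactly the point the paper's proof spends its ``slight complication'' paragraph on: your open-trail induction step treats both pieces as open trails governed by condition~\ref{itm:reconf:open}, but the suffix $\trailT = v_2 \uline{} \ldots \uline{} v_t$ may be \emph{closed} (the trail need not be vertex-simple), in which case the requirements to verify are conditions~\ref{itm:reconf:btight}--\ref{itm:reconf:abtight}, and your ordering rule based on the tightness of $v_2$ alone then fails. Concretely: let $\trailT$ be an even alternating cycle with $v_2 = v_t$ whose vertices other than $v_2$ are all \btight but not \atight in \M, let $v_2$ be \atight with $\degree{\M}{v_2} = b(v_2)-1$, and attach the pendant path $v_0 \uline{\edge_0} v_1 \uline{\edge_1} v_2$ with $\edge_0 \in \M$, $\edge_1 \in \N$ and $v_0$ not \atight. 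Then \trail satisfies conditions~\ref{itm:reconf:fixed} and~\ref{itm:reconf:open}, so the lemma promises reconfigurability; but your rule sees ``$v_2$ not \btight'' and reconfigures $\trailS$ first, producing $\M - \edge_0 + \edge_1$, in which $v_2$ has become \btight. Now the closed piece $\trailT$ is \btight, and by your own necessity argument no elementary move inside it can complete: your procedure is stuck, while the opposite order ($\trailT$ first, when it still satisfies~\ref{itm:reconf:btight}--\ref{itm:reconf:abtight} in \M) succeeds. The paper resolves this by choosing a split index $i$ with $v_i$ not \atight \emph{and} $v_{i+2}$ not \btight, and then proving that a closed piece satisfies conditions~\ref{itm:reconf:fixed}--\ref{itm:reconf:closed} either in \M or in $\M - \edge_i + \edge_{i+1}$, ordering the pieces accordingly. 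Your closing remark that non-simplicity is a ``secondary difficulty'' of degree bookkeeping underestimates the issue: it changes \emph{which} conditions must be verified for a sub-piece, not merely how degrees are counted.

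The closed-trail case is also incomplete precisely where you flag it yourself. You reduce sufficiency to the existence of an ``opening vertex'' $v$ whose \M-neighbour is not \atight and whose \N-neighbour is not \btight (and the observation that after the opening swap the remaining open trail automatically has the endpoint slack of condition~\ref{itm:reconf:open} is a nice one), but the key claim --- that the only cyclic tightness patterns admitting no opening vertex are the all-\atight, all-\btight, and alternatingly \abtight ones, i.e.\ exactly the configurations excluded by~\ref{itm:reconf:btight}--\ref{itm:reconf:abtight} --- is stated as an expectation, not proved. This existence claim is where the paper's four-case analysis of the closed trail (rearranging the cycle so that a conveniently tight or non-tight vertex sits at position $v_0$, then splitting into two open even trails $Q$ and $R$) does the actual work. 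Moreover, the open trail left over after your opening move feeds back into the defective open-trail step above, so as it stands the sufficiency direction is unproven in both the open and the closed case.
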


\begin{proof}
  Without loss of generality, let $\edge_0$ be an \M-edge.
  We first show the necessity of
  conditions~\ref{itm:reconf:fixed}--\ref{itm:reconf:closed}.  By
  Proposition~\ref{prop:mfixed}, if $\trail$ contains an $\M$-fixed vertex then
  $\trail \cap \M$  is not $k$-reconfigurable to $\trail \cap \N$ for any $k
  \geq 1$.  If \trail is open and $v_0$ is \atight in \M, then $\edge_0$ cannot be
  removed from $\trail \cap \M$
  without violating the degree constraints.  Likewise, if \trail is open and $v_t$ is 
  \btight then $\edge_t$ cannot be added to $\trail \cap \M$. If \trail is
  closed and \btight in \M, i.e.,~\ref{itm:reconf:btight} is violated, then no
  \N-edge can be added after removing any \M-edge. Similarly, if \trail is
  closed and \atight in \M, i.e.,~\ref{itm:reconf:atight} is violated, then no
  \M-edge can be removed after adding a single \N-edge. If \trail is closed and
  alternatingly \abtight then no edge can be added to or removed from $\trail
  \cap \M$, so it cannot be internally 1-reconfigurable to $\trail \cap \N$. In
  summary, if any of the
  conditions~\ref{itm:reconf:fixed}--\ref{itm:reconf:closed} is violated then
  $\trail \cap \M$ is not internally 1-reconfigurable to $\trail \cap \N$ using
  elementary moves. 

  In order to show the sufficiency of
  conditions~\ref{itm:reconf:fixed}--\ref{itm:reconf:closed} we employ the
  following general strategy: We partition \trail into $(\M,\N)$-alternating
  subtrails $R$, $Q$, $S$, each of even length and at least two of them
  non-empty.  We show that for an appropriate choice of these subtrails there
  is an ordering, say $Q$, $R$, $S$, such that the first two subtrails are
  non-empty and $Q$ satisfies
  conditions~\ref{itm:reconf:fixed}--\ref{itm:reconf:closed} in \M, $R$
  satisfies the same conditions in $\M - (Q \cap \M) + (Q \cap \N)$, and $S$,
  if non-empty, in turn satisfies the conditions in $\M - (Q \cap \M) + (Q \cap
  \N) - (R \cap \M) + (R \cap \N)$. Therefore, each non-empty subtrail can be
  dealt with individually in a recursive fashion as long as the ordering is
  respected.  The base case of the recursion consists of an
  $(\M,\N)$-alternating trail $\base = u \uline{} v \uline{} w$ of length two.
  We show that if \base
  satisfies~\ref{itm:reconf:fixed}--\ref{itm:reconf:closed} then $\base \cap
  \M$ is internally 1-reconfigurable to $\base \cap \N$ by an elementary move.
  Since \base satisfies conditions~\ref{itm:reconf:fixed} and~\ref{itm:reconf:open},
  $u$ is not \atight and $w$ is not \btight. If $v$ is \btight we can remove $u
  \uline{} v$ from \M and add $v \uline{} w$ to $\M - (u \uline{} v)$ without
  violating the degree constraints in any step. Similarly, if $v$ is not
  \btight we can add $v \uline{} w$ to \M and afterwards remove $u \uline{} v$
  from $\M + (v \uline{} w)$ without violating the degree constraints. So
  $B \cap \M$ is internally 1-reconfigurable to $B \cap \N$ by an elementary
  move as required. For the general recursion, we consider two main cases:
  \trail is either open or closed. 
  %Note that condition~\ref{itm:reconf:fixed}
  %holds for any subtrail of \trail we encounter.

  We first assume that \trail is open. Since \trail satisfies
  conditions~\ref{itm:reconf:fixed} and~\ref{itm:reconf:open}, there is some $i$, $0 \leq i < t-1$, $i$
  even, such that $v_i$ is not \atight and $v_{i+2}$ is not \btight in \M. To
  see this, assume that there is no such $i$. Then, by induction, for each $0
  \leq i < t-1$, $i$ even, $v_{i+2}$ must \btight because $v_{i}$ is not
  \atight.  However, by condition~\ref{itm:reconf:open}, $v_t$ is not \btight in
  \M, a contradiction. We pick $R$, $Q$, and $S$ as follows
  \[
	  \underbrace{v_0 \uline{e_0} v_1 \uline{e_1} \ldots \uline{e_{i-1}} v_i}_{R},\quad \underbrace{v_i \uline{e_i} v_{i+1} \uline{e_{i+1}} v_{i+2}}_{Q},\quad \underbrace{v_{i+2} \uline{e_{i+2}} v_{i+3} \uline{e_{i+3}} \ldots \uline{e_{t-1}} v_t}_{S}\enspace.
  \]
  Note that $Q$ is an open trail satisfying
  conditions~\ref{itm:reconf:fixed}--\ref{itm:reconf:closed} and $R$, $S$, if
  non-empty, can be open or closed. At this point $Q \cap \M$ is internally
  1-reconfigurable to $Q \cap \N$ as described above, and the result is $\M -
  \edge_{i} + \edge_{i+1}$. It is readily verified that if $R$ and $S$ are open
  then they satisfy conditions~\ref{itm:reconf:fixed}--\ref{itm:reconf:closed}
  in $\M - \edge_{i} + \edge_{i+1}$ and can therefore be treated independently
  after reconfiguring $Q$. However, at least one of $R$, $S$ being closed leads
  to a slight complication. 
  %As a visual aid, Figure~\ref{fig:openreconf} shows a
  %example of an open $(\M,\N)$-alternating trail of even length, where a proper
  %choice of $R$, $Q$, and $S$ causes $R$ and $S$ to be both closed. In the shown
  %example the solid edges belong to \M and the dashed edges to \N. 

  Let us assume that $S$ is closed. Note that, by assumption, $v_{i+2}$ is not
  \btight in \M and therefore it cannot be \atight in $\M - \edge_i + \edge_{i+1}$. Thus, if any of the
  conditions~\ref{itm:reconf:fixed}--\ref{itm:reconf:closed} is violated in \M
  then it cannot be violated in $\M - \edge_i + \edge_{i+1}$.
  Since~\ref{itm:reconf:btight}--\ref{itm:reconf:abtight} cannot be violated
  simultaneously we conclude that $S$ satisfies
  conditions~\ref{itm:reconf:fixed}--\ref{itm:reconf:closed} either in \M or
  in $\M - \edge_i + \edge_{i+1}$. An analogous argument shows that $R$
  satisfies conditions~\ref{itm:reconf:fixed}--\ref{itm:reconf:closed} either
  in \M or in $\M - \edge_i + \edge_{i+1}$.
  Therefore, there is an ordering of $R$, $Q$, $S$ that is consistent with the
  general strategy outlined above and depends on the tightness of the vertices
  $v_i$ and $v_{i+2}$ in \M.
  As a visual aid, Figure~\ref{fig:openreconf} shows a
  example of an open $(\M,\N)$-alternating trail of even length, where a proper
  choice of $R$, $Q$, and $S$ causes $R$ and $S$ to be both closed. In the shown
  example the solid edges belong to \M and the dashed edges to \N. 

  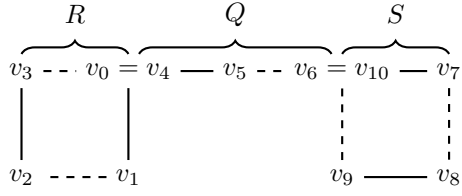
\begin{figure}
	\begin{center}
	\begin{tikzpicture}[node distance=4em,vertex/.style={},m0edge/.style={thick,solid},m1edge/.style={thick,dashed}]
		\node[]	(v0)	{$v_0=v_4$};
		\node[below of=v0]	(v1)	{$v_1$};
		\node[left of=v1]	(v2)	{$v_2$};
		\node[above of=v2]	(v3)	{$v_3$};
		\node[right of=v0]	(v5)	{$v_5$};
		\node[right of=v5]	(v6)	{$v_6=v_{10}$};
		\node[right of=v6]	(v7)	{$v_7$};
		\node[below of=v7]	(v8)	{$v_8$};
		\node[left of=v8]	(v9)	{$v_9$};
		\draw[m0edge] (v0) -- (v1);
		\draw[m0edge] (v2) -- (v3);
		\draw[m0edge] (v0) -- (v5);
		\draw[m0edge] (v6) -- (v7);
		\draw[m0edge] (v8) -- (v9);
		\draw[m1edge] (v1) -- (v2);
		\draw[m1edge] (v3) -- (v0);
		\draw[m1edge] (v5) -- (v6);
		\draw[m1edge] (v7) -- (v8);
		\draw[m1edge] (v9) -- (v6);

		\draw [thick,decoration={brace,amplitude=0.5em},decorate] ([xshift=4pt]v0.north) -- ([xshift=-4pt]v6.north) node[midway,yshift=1.5em] {$Q$};
		\draw [thick,decoration={brace,amplitude=0.5em},decorate] (v3.north) -- (v0.north) node[midway,yshift=1.5em] {$R$};
		\draw [thick,decoration={brace,amplitude=0.5em},decorate] (v6.north) -- (v7.north) node[midway,yshift=1.5em] {$S$};
	\end{tikzpicture}
	\end{center}
	\caption{Decomposition of an $(\M,\N)$-alternating trail $\trail = v_0
		\protect\uline{} \ldots \protect\uline{} v_{10}$ of even length into three trails
			$R$, $Q$, and $S$. Note that \trail is open, but $R$ and $S$ are
			closed.\label{fig:openreconf}}
  \end{figure}

  It remains to be shown that if \trail is closed and \trail satisfies
  properties~\ref{itm:reconf:fixed}--\ref{itm:reconf:closed} then $\trail \cap
  \M$ is internally 1-reconfigurable to $\trail \cap \N$ using only elementary
  moves.
 % In order to simplify our notation, we index the vertices of the \trail and
 % \trailS such that for any $\ell \geq 0$, $v_\ell$ refers to the vertex
 % $v_{\ell\!\!\mod{(t+1)}}$. Similarly, for any $\ell \geq 0$, $\edge_\ell$
 % refers to the edge $\edge_{\ell\!\!\mod{t}}$ of \trail.  
  For this purpose we find a partition of \trail into subtrails that is
  compatible with our general strategy above.
  In particular, we show that if \trail is closed and satisfies
  properties~\ref{itm:reconf:fixed}--\ref{itm:reconf:closed} then \trail can
  be partitioned into two non-empty open trails $R$ and $Q$, both of even
  length,   such that $Q$
  satisfies~\ref{itm:reconf:fixed}--\ref{itm:reconf:closed}. Furthermore, 
  we show that $R$ satisfies
  conditions~\ref{itm:reconf:fixed}--\ref{itm:reconf:closed} in $\M - (Q \cap
  \M) + (Q \cap \N)$. That is, $R$ can be dealt with after
  reconfiguring $Q \cap \M$ to $Q \cap \N$. We pick any two vertices of \trail
  that are connected by an \M-edge, say $v_0$ and $v_1$, and consider the
  following cases:
  \begin{enumerate}[label=(\roman*)]
	  \item $v_0$ is neither \atight nor \btight, or $v_1$ is neither \atight nor \btight \label{itm:ctreconf:nabtight}
	  \item $v_0$ is \atight and $v_1$ is \btight, or $v_0$ is \btight and $v_1$ is \atight \label{itm:ctreconf:abtight}
	  \item $v_0$ and $v_1$ are both \btight \label{itm:ctreconf:btight}
	  \item $v_0$ and $v_1$ are both \atight \label{itm:ctreconf:atight}
  \end{enumerate}

\paragraph{Case~\ref{itm:ctreconf:nabtight}}{
  We assume without loss of generality, that $v_0$ is neither \atight nor
  \btight, since if $v_0$ is \atight or \btight, then $v_1$ must be neither \atight
  nor \btight and if this is the case we can rearrange the vertices of $T$ in the
  following way
  \begin{equation}\label{eq:rearrange}
	  T = v_1 \uline{} v_0 \uline{} v_{t-1} \uline{} \ldots \uline{} v_2 \uline{} v_1
  \end{equation}
  and establish that $v_0$ is neither \atight nor \btight. Now we choose $R$ and
  $Q$ as follows:
  \[
	  R = v_2 \uline{} v_3 \uline{} \ldots \uline{} v_{t},\quad Q = v_0 \uline{} v_1 \uline{} v_2
  \]
  If $v_{2}$ is not \btight, then $Q$
  satisfies~\ref{itm:reconf:fixed}--\ref{itm:reconf:closed} and $Q \cap \M$ can be
  reconfigured instantly to $Q \cap \N$ by an elementary move. That is, we obtain
  $\M - \edge_0 + \edge_1$ without violating the degree constraints. Now
  $v_0$ cannot be \btight and $v_{2}$
  cannot be \atight in $\M - \edge_0 + \edge_1$. Therefore, $R$ now
  satisfies~\ref{itm:reconf:fixed}--\ref{itm:reconf:closed} and can be
  reconfigured as shown in the first main case of the proof. If $v_{2}$ is
  \btight in \M, then, by analogous considerations, $R$
  satisfies~\ref{itm:reconf:fixed}--\ref{itm:reconf:closed} and $Q$
  satisfies conditions~\ref{itm:reconf:fixed}--\ref{itm:reconf:closed} in $\M -
  (R \cap \M) + (R \cap \N)$.
  }
\paragraph{Case~\ref{itm:ctreconf:abtight}}{
  Without loss of generality, we assume that $v_0$ is \atight and $v_1$ is
  \btight, since if not, we can rearrange the vertices of \trail according to
  Eq.~\eqref{eq:rearrange}. Due to property~\ref{itm:reconf:abtight}, \trail is
  not alternatingly \abtight, so there is some $i$, $0 \leq i < t$, $i$ even,
  such that $v_i$ is not \atight or $v_{i+1}$ is not \btight. If $v_i$ is not
  \atight, we choose $R$ and $Q$ to be
  \[
	  Q = v_i \uline{} v_{i+1} \uline{} \ldots \uline{} v_t (= v_0),\quad R = v_0 \uline{} v_1 \uline{} \ldots \uline{} v_i
  \]
  Otherwise, $v_{i+1}$ is not \btight and we pick $R$ and $Q$ as follows
  \[
	  Q = v_1 \uline{} v_0 \uline{} \ldots \uline{} v_{i+1},\quad R = v_{i+1} \uline{} v_{i} \uline{} \ldots \uline{} v_1
  \]
  Either way, $Q$ satisfies satisfies
  conditions~\ref{itm:reconf:fixed}--\ref{itm:reconf:closed} in \M and $R$ satisfies
  the same conditions in $\M - (Q \cap \M) + (Q \cap \N)$.
  }
\paragraph{Case~\ref{itm:ctreconf:btight}}{
  If $v_0$ and $v_1$ are both \btight then, by property~\ref{itm:reconf:btight},
  there is some $i$, $0 \leq i < t$, such that $v_i$ is not \btight. Without loss
  of generality, we assume that $i$ is even. We pick $Q$
  and $R$ as follows:
  \[
	  Q = v_0 \uline{} v_1 \uline{} \ldots \uline{} v_i,\quad R = v_i \uline{} v_{i-1} \uline{} \ldots \uline{} v_0
  \] 
  Then $Q$ satisfies~\ref{itm:reconf:fixed}--\ref{itm:reconf:closed} and $R$
  satisfies the same conditions in $\M - (Q \cap \M) + (Q \cap \N)$.
  }
\paragraph{Case~\ref{itm:ctreconf:atight}}{
  This case is analogous to case~\ref{itm:ctreconf:btight}. By
  property~\ref{itm:reconf:atight}, there is some $i$, $0 \leq i < t$, such that
  $v_i$ is not \atight. Again, without loss of generality, we assume that $i$
  is even.  We pick $Q$ and $R$ as follows
  \[
	  Q = v_0 \uline{} v_1 \uline{} \ldots \uline{} v_i,\quad R = v_i \uline{} v_{i-1} \uline{} \ldots \uline{} v_0
  \]
  and conclude that $Q$
  satisfies~\ref{itm:reconf:fixed}--\ref{itm:reconf:closed} in \M and $R$
  satisfies~\ref{itm:reconf:fixed}--\ref{itm:reconf:closed} in $\M - (Q \cap \M)
  + (Q \cap \N)$.
  }

From our consideration of the various cases we conclude that a given
$(\M,\N)$-alternating trail \trail that satisfies
conditions~\ref{itm:reconf:fixed}--\ref{itm:reconf:closed} can be recursively
partitioned into subtrails as outlined in the general
strategy above.  Since the single base case of the recursion employs only
elementary moves on edges of \trail, $\trail \cap \M$ is internally
1-reconfigurable to $\trail \cap \N$ using only elementary moves.
\end{proof}

An $(\M,\N)$-alternating trail is \emph{maximal} if there is no suitable edge
in $\M \symdiff \N$ to extend the trail at one of its end nodes.  The
subsequent lemmas establish sufficient conditions for maximal alternating
trails to be internally 1- or 2-reconfigurable. Such trails are important in
the proof of Theorem~\ref{thm:stdcsconn}. Note however, that in contrast to
Lemma~\ref{lemma:em1reconfig} we are not restricted to elementary moves. 
In the following, let $\trail = v_0 \uline{\edge_0} \ldots \uline{\edge_{t-1}}
v_t$ be a maximal $(\M,\N)$-alternating trail in $\M \symdiff \N$ such that no
vertex of \trail is \abfixed.

\begin{lemma}
	If \trail is open and has even length then $\trail \cap \M$ is internally
	1-recon\-figurable to $\trail \cap \N$.
	\label{lemma:even1reconfig}
\end{lemma}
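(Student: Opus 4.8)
The plan is to reduce the statement to Lemma~\ref{lemma:em1reconfig}, which already gives necessary and sufficient conditions for $\trail \cap \M$ to be internally 1-reconfigurable to $\trail \cap \N$ using elementary moves; since elementary moves are permitted here, it suffices to verify that the maximal, open, even-length trail \trail meets conditions~\ref{itm:reconf:fixed}--\ref{itm:reconf:closed}. Orienting \trail so that $\edge_0 \in \M - \N$ (reversing it otherwise), the even length forces $\edge_{t-1} \in \N - \M$. Condition~\ref{itm:reconf:fixed} holds by hypothesis, as no vertex of \trail is \abfixed, and condition~\ref{itm:reconf:closed} is vacuous because \trail is open. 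So the whole proof comes down to checking condition~\ref{itm:reconf:open}, namely that $v_0$ is not \atight and $v_t$ is not \btight in \M, and the point is that this is forced by maximality together with the fact that \N is \abconstrained.

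First I would treat the start vertex $v_0$. Since \trail cannot be extended by prepending an edge of $\N - \M$ incident to $v_0$, maximality guarantees that \emph{every} $\N - \M$ edge at $v_0$ already lies on \trail. To exploit this I would count the trail edges at $v_0$ by occurrence: the endpoint occurrence contributes the edge $\edge_0 \in \M - \N$, whereas each interior occurrence of $v_0$ contributes exactly one $\M - \N$ edge and one $\N - \M$ edge, because consecutive trail edges alternate. If $v_0$ occurs $p$ times in the interior, then \trail contains $p+1$ edges of $\M - \N$ and $p$ edges of $\N - \M$ at $v_0$; by maximality the latter accounts for all $\N - \M$ edges incident to $v_0$, whence $\degree{\M}{v_0} - \degree{\N}{v_0} \geq (p+1) - p = 1$. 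As \N is \abconstrained we have $\degree{\N}{v_0} \geq a(v_0)$, so $\degree{\M}{v_0} \geq a(v_0) + 1$ and $v_0$ is not \atight.

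The argument at the end vertex $v_t$ is symmetric: the endpoint occurrence contributes $\edge_{t-1} \in \N - \M$, maximality now forbids appending an $\M - \N$ edge at $v_t$, and the same bookkeeping gives $\degree{\N}{v_t} - \degree{\M}{v_t} \geq 1$; since $\degree{\N}{v_t} \leq b(v_t)$ this yields $\degree{\M}{v_t} \leq b(v_t) - 1$, so $v_t$ is not \btight. With condition~\ref{itm:reconf:open} established alongside conditions~\ref{itm:reconf:fixed} and~\ref{itm:reconf:closed}, Lemma~\ref{lemma:em1reconfig} applies and completes the proof. I expect the main obstacle to be precisely this bookkeeping with repeated vertices: a trail may pass through $v_0$ or $v_t$ several times, so one cannot read off the endpoint degrees from the boundary edge alone. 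The key observation that each interior visit contributes a balanced pair of edges, one from $\M - \N$ and one from $\N - \M$, is what makes the degree comparison go through independently of the multiplicities, and it is the step I would verify most carefully.
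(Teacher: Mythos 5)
Your proof is correct and takes essentially the same route as the paper's: orient \trail so that $\edge_0 \in \M$, use maximality to conclude $\degree{\M}{v_0} > \degree{\N}{v_0}$ and $\degree{\M}{v_t} < \degree{\N}{v_t}$, hence (since \N is \abconstrained) $v_0$ is not \atight and $v_t$ is not \btight, and then invoke Lemma~\ref{lemma:em1reconfig}. Your explicit occurrence-counting at repeated visits of $v_0$ and $v_t$ just spells out the bookkeeping that the paper compresses into the one-line degree comparison.
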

\begin{proof}
  Without loss of generality, we assume that $\edge_0 \in \M$. Then
  $\degree{\M}{v_0} > \degree{\N}{v_0}$ and $v_0$ is not \atight in \M. Since
  \trail is maximal and has even length, $\edge_{t-1} \in \N$. Therefore,
  $\degree{\M}{v_t} < \degree{\N}{v_t}$ and thus $v_t$ is not \btight in \M. By
  assumption, \trail has no \abfixed vertex. Therefore, $\trail \cap \M$ is
  internally 1-reconfigurable to $\trail \cap \N$ by
  Lemma~\ref{lemma:em1reconfig}.
\end{proof}

\begin{lemma}
  If \trail has odd length and $\edge_0$ is an \N-edge then
  \begin{enumerate}
	\item $\trail \cap \M$ is internally 1-reconfigurable to $\trail \cap \N$, and \label{itm:odd1reconfig}
	\item $\trail \cap \N$ is internally 2-reconfigurable to $\trail \cap \M$. \label{itm:odd2reconfig}
  \end{enumerate}
  \label{lemma:oddreconfig}
\end{lemma}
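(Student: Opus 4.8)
The plan is to deduce both claims from Lemma~\ref{lemma:em1reconfig} by peeling off a single surplus edge and reconfiguring the remaining even-length sub-trail with elementary moves. First I would record the structural consequences of the hypotheses. Since $e_0 \in \N$ and \trail alternates, the edge $e_i$ lies in $\N - \M$ exactly when $i$ is even; as $t$ is odd, $e_{t-1} \in \N - \M$ as well, so both terminal edges are \N-edges and $|E(\trail \cap \N)| = |E(\trail \cap \M)| + 1$. (In particular \trail is open, since a closed alternating trail has even length.) Maximality at $v_0$ means that no $(\M-\N)$-edge can precede $e_0$; as $v_0$ carries no \M-edge of \trail, it has no $(\M-\N)$-edge at all, so every \M-edge at $v_0$ is also an \N-edge. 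Together with $e_0 \in \N-\M$ this gives $\degree{\M}{v_0} < \degree{\N}{v_0} \le b(v_0)$, whence $v_0$ is not \btight in \M; the same argument applies to $v_t$.

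For claim~\ref{itm:odd1reconfig} I would reconfigure $\trail \cap \M$ to $\trail \cap \N$ as follows. Deleting the terminal edge $e_{t-1}$ yields the even-length $(\M,\N)$-alternating sub-trail $T' = v_0 \uline{} \cdots \uline{} v_{t-1}$, whose two ends carry an \N-edge ($e_0$, at $v_0$) and an \M-edge ($e_{t-2}$, at $v_{t-1}$). By Lemma~\ref{lemma:em1reconfig}, $T' \cap \M$ is internally 1-reconfigurable to $T' \cap \N$ using elementary moves once its conditions hold; the \N-end condition ``$v_0$ not \btight'' is already established, so the only possible obstruction is at the \M-end $v_{t-1}$. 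I therefore add the surplus edge $e_{t-1}$ either before or after reconfiguring $T'$: if $v_{t-1}$ is not \atight in \M, I reconfigure $T'$ first and then add $e_{t-1}$ (legitimate because $v_t$ is not \btight and $v_{t-1}$ has just lost $e_{t-2}$); if $v_{t-1}$ is \atight in \M it is not \btight (no \abfixed vertex), so I add $e_{t-1}$ first, which makes $v_{t-1}$ non-\atight, and then reconfigure $T'$. Since the \M- and \N-parts of $T'$ have equally many edges, the elementary moves keep the edge count constant up to a transient drop of one, while the single surplus addition only raises it; hence every intermediate subgraph has at least $|E(\trail \cap \M)| - 1$ edges, and $\trail \cap \M$ is internally 1-reconfigurable to $\trail \cap \N$.

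Claim~\ref{itm:odd2reconfig} would then follow simply by reversing the reconfiguration of claim~\ref{itm:odd1reconfig}: a reconfiguration move is reversible, and the reversed sequence passes through exactly the same intermediate subgraphs, each degree-feasible, each contained in $\M \symdiff \N$, and each with at least $|E(\trail \cap \M)| - 1 = \min\{|E(\trail \cap \M)|, |E(\trail \cap \N)|\} - 1$ edges. Thus $\trail \cap \N$ is internally reconfigurable to $\trail \cap \M$ staying within one edge of the minimum, and in particular internally 2-reconfigurable.

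The hard part is verifying the hypotheses of Lemma~\ref{lemma:em1reconfig} for $T'$ when $T'$ is closed, i.e.\ $v_0 = v_{t-1}$, since then the closed-trail conditions \ref{itm:reconf:btight}--\ref{itm:reconf:abtight} are not automatic. Here $v_0 = v_{t-1}$ is still not \btight in \M, so condition~\ref{itm:reconf:btight} holds in \M; if $T'$ is in addition neither \atight nor alternatingly \abtight in \M I reconfigure it first, and otherwise I add $e_{t-1}$ first, which turns $v_0$ non-\atight. The point I would stress is that these two orderings cannot both fail: a failure of the first forces every interior vertex to be \atight in \M, whereas a failure of the second --- after adding $e_{t-1}$, which does not change interior degrees --- forces every interior vertex to be \btight; since no vertex is \abfixed, the interior vertices cannot be both, so at least one ordering satisfies the conditions of Lemma~\ref{lemma:em1reconfig}. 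This use of the ``no \abfixed vertex'' assumption at the interior vertices is the crux of the whole argument.
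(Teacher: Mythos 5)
Your proposal is correct in substance but takes a genuinely different route from the paper on both claims. For claim~\ref{itm:odd1reconfig} the paper recursively splits \trail at an interior non-\atight vertex $v_i$ into an even-length trail $R$ (handled by Lemma~\ref{lemma:em1reconfig}) and a shorter odd trail $S$ to recurse on, with a single \N-edge addition as base case, and resolves the all-interior-vertices-\atight situation by adding $\edge_0$ first; you instead peel off the terminal \N-edge $\edge_{t-1}$ once and reduce everything to a single application of Lemma~\ref{lemma:em1reconfig} to the even trail $T'$, choosing whether the surplus addition precedes or follows according to whether $v_{t-1}$ is \atight. This avoids the recursion entirely, at the price of having to treat the case $v_0 = v_{t-1}$ ($T'$ closed) by hand---a case the paper's splitting sidesteps (though the paper is itself silent about its own $R$ possibly being closed). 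For claim~\ref{itm:odd2reconfig} the paper runs a second, dual recursion (splitting at non-\btight interior vertices, removing $\edge_0$ first when every interior vertex is \btight, whence its constant $2$); you instead observe that internal $k$-reconfigurability is symmetric: reversing a move sequence preserves the intermediate subgraphs, hence the degree constraints, containment in $\M \symdiff \N$, and the bound $\min\{\cdot,\cdot\}-k$, which is symmetric in the two endpoints. This reversal argument is valid under the paper's definitions, derives claim~\ref{itm:odd2reconfig} from claim~\ref{itm:odd1reconfig} with the stronger constant $k=1$, and shows the paper's separate proof of the second claim is not needed for the lemma as stated.

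Two spots in your write-up need tightening, both repairable. First, ``a closed alternating trail has even length'' is not forced by the paper's definitions: an odd closed trail whose two terminal edges are both \N-edges meeting at $v_0 = v_t$ is alternating in the stated edge-sequence sense, and such maximal trails can occur; your argument survives this case, since maximality then gives $\degree{\N}{v_0} \geq \degree{\M}{v_0} + 2$, leaving room to add $\edge_{t-1}$ at $v_t = v_0$ in either ordering. (Relatedly, ``$v_0$ carries no \M-edge of \trail'' fails if \trail revisits $v_0$; the correct count is one $(\M-\N)$-edge and one $(\N-\M)$-edge per interior visit plus the terminal \N-edge, which still yields $\degree{\M}{v_0} < \degree{\N}{v_0}$, matching the paper's own unproved assertion.) Second, in the closed-$T'$ case your dichotomy of failure modes is incomplete: the first ordering can fail not only because $T'$ is \atight in \M but also because it is alternatingly \abtight (condition~\ref{itm:reconf:abtight}), and the second because $T'$ is alternatingly \abtight or \btight in $\M + \edge_{t-1}$ (note $v_0$, being non-\btight in \M, may become \btight after the addition, so condition~\ref{itm:reconf:btight} is not automatic there). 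The repair uses exactly the clash you identify: since $v_0$ is not \btight in \M and not \atight in $\M + \edge_{t-1}$, any alternatingly \abtight pattern is phase-locked at $v_0$ in each of the two subgraphs, and comparing interior vertices---whose degrees $\edge_{t-1}$ does not change---across any pair of simultaneous failure modes produces a vertex that is both \atight and \btight, contradicting the absence of \abfixed vertices. With these two repairs the proof is complete.
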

\begin{proof}
  We first prove part~\ref{itm:odd1reconfig} of the statement. Since \trail is
  maximal, the end nodes $v_0$ and $v_t$ are not \btight in \M.
  We recursively divide \trail into two subtrails $R$ and $S$, such that $R$
  has even length and $S$ has odd length and no end vertex of $S$ is \btight,
  either in $\M$ or in $\M - (R \cap \M) + (R \cap \N)$. The base case of this
  recursion that is not covered by Lemma~\ref{lemma:em1reconfig} consists of
  $S$ having a single \N-edge. Since none of the end vertices is \btight, the
  remaining \N-edge can be added without violating the degree constraints.

  Two subcases that occur when dividing \trail into subtrails $R$
  and $S$. First assume that there is some $i$, $0 < i < t$, such that $v_i$ is
  not \atight. If $i$ is even then we pick $R = v_i \uline{} \ldots \uline{}
  v_0$ and choose $S = v_i \uline{} \ldots \uline{} v_t$.  If $i$ is odd, we
  pick $R = v_i \uline{} \ldots \uline{} v_t$ and $S = v_0 \uline{} \ldots
  \uline{} v_i$.  Either way, $R \cap \M$ is internally 1-reconfigurable to $R
  \cap \N$ by Lemma~\ref{lemma:em1reconfig}. Furthermore, $S$ is an odd-length
  $(\N,\M)$-alternating and no end vertex of $S$ is \btight in $\M - (R \cap \M)
  + (R \cap \N)$. On the other hand, if there is no such $i$, then we pick $R =
  v_1 \uline{} \ldots \uline{} v_t$ and $S = v_0 \uline{\edge_0} v_1$. Now, $\M
  + \edge_0$ does not violate the degree constraints and $R \cap (\M +
  \edge_0)$ is internally 1-reconfigurable to $R \cap (\N - \edge_0)$ according
  to Lemma~\ref{lemma:em1reconfig}, since $v_1$ is not \atight in $\M +
  \edge_0$ and $v_t$ is not \btight by assumption. Therefore, in each case the
  subtrails are 1-reconfigurable as required.

  The proof of part~\ref{itm:odd2reconfig} is somewhat analogous:
  Since \trail is maximal, $v_0$ and $v_t$ cannot be \atight in \N. We again
  recursively divide \trail into two subtrails $R$ and $S$, such
  that $R$ has even length and $S$ has odd length and no end vertex of $S$ is
  \atight, in $\M$ or in $\M - (R \cap \M) + (R \cap \N)$. We distinguish the
  following two cases: First, we assume that each internal vertex of \trail is
  \btight. Then we pick $S = v_0 \uline{} v_1$ and $R = v_1 \uline{} \ldots
  \uline{} v_t$. We can reconfigure $S$ by removing $\edge_0$ and note that
  $R$ is internally 1-reconfigurable in $(\M - \edge_{t-1}) \symdiff \N$ by
  Lemma~\ref{lemma:em1reconfig}. Therefore, \trail is 2-reconfigurable.  Now,
  assume that there is some $i$, $0 < i < t$, such that $v_i$ is not \btight.
  If $i$ is even we choose $R = v_0 \uline{} \ldots \uline{} v_i$ and $S =
  v_{t} \uline{} \ldots \uline{} v_{i+1}$. If $i$ is odd we swap the choices of
  $R$ and $S$. Now, $R \cap \M$ is internally 1-reconfigurable to $R \cap \N$
  and no end vertex of $S$ is \atight, which means we can proceed recursively
  by reconfiguring $S$.
\end{proof}

\begin{lemma}
	If \trail is closed and has even length then 
  \begin{enumerate}
	\item $\trail \cap \M$ is internally 2-reconfigurable to $\trail \cap \N$ if~\trail is not alternatingly \abtight, and\label{itm:even2reconfig}
	\item $\trail \cap \M$ is internally 1-reconfigurable to $\trail \cap \N$ if \trail is neither \btight nor alternatingly \abtight.\label{itm:even1reconfig}
  \end{enumerate}
  \label{lemma:btightreconfig}
\end{lemma}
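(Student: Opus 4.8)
The plan is to reduce to Lemma~\ref{lemma:em1reconfig} wherever possible and to spend reconfiguration budget only when its hypotheses genuinely fail. For a closed trail of even length the conditions of Lemma~\ref{lemma:em1reconfig} are precisely \ref{itm:reconf:btight}--\ref{itm:reconf:abtight}, namely that \trail is not \btight, not \atight, and not alternatingly \abtight in \M. By hypothesis no vertex of \trail is \abfixed, so no vertex is simultaneously \atight and \btight; in particular \trail cannot be both \atight and \btight, and whenever it is \atight (resp.\ \btight) every vertex satisfies $\degree{\M}{v} = a(v) < b(v)$ (resp.\ $a(v) < b(v) = \degree{\M}{v}$), so there is room to add (resp.\ to remove) at least one edge at each vertex. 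First I would dispose of the case where \trail is neither \atight nor \btight: then \ref{itm:reconf:btight}--\ref{itm:reconf:abtight} all hold, using the respective ``not alternatingly \abtight'' hypothesis, so Lemma~\ref{lemma:em1reconfig} already gives internal $1$-reconfigurability, settling this case for both parts. It remains to treat the two tight cases, which is where the budget is needed.

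For part~\ref{itm:even2reconfig} with \trail \btight, the obstruction is that no \N-edge can be added first. I would remove a single \M-edge $\edge_0 = v_0 \uline{} v_1$, dropping the count to $|E(\trail \cap \M)| - 1$ and making $v_0$ and $v_1$ no longer \btight. What remains is the open trail $v_1 \uline{\edge_1} \ldots \uline{\edge_{t-1}} v_0$ of odd length whose first and last edges lie in \N and whose end vertices are not \btight. This is exactly the situation of part~\ref{itm:odd1reconfig} of Lemma~\ref{lemma:oddreconfig}; the end-vertex condition is the only consequence of maximality used there, and here I supply it directly, so this residual trail is internally $1$-reconfigurable from its \M-part to its \N-part. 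Composing the two phases, the count never drops below $|E(\trail \cap \M)| - 2$, giving internal $2$-reconfigurability. This also explains why part~\ref{itm:even1reconfig} must exclude the \btight case: from an all-\btight trail the first move is forced to be a removal, and one checks that after removing $\edge_0$ no addition is legal (every addable \N-edge has a still-\btight endpoint), so a second removal is forced and the count is driven two below the minimum, whence $1$-reconfigurability genuinely fails.

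The case \trail \atight is common to both parts and is the crux. Here every vertex has room to grow, so instead of removing I would add first. Adding the \N-edge $\edge_1 = v_1 \uline{} v_2$ raises the count to $|E(\trail \cap \M)| + 1$ and makes $v_1$ and $v_2$ no longer \atight, without spending any budget on the low side. The remaining symmetric difference is the odd-length trail $B = v_2 \uline{\edge_2} \ldots \uline{\edge_{t-1}} v_0 \uline{\edge_0} v_1$ whose end edges lie in \M, and I would reconfigure $B \cap \M$ to $B \cap \N$ by an add-before-remove sweep that adds each \N-edge of $B$ immediately before removing the \M-edge preceding it along $B$, removing the final unpaired \M-edge $\edge_0$ last. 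The point of adding $\edge_1$ up front is that both end vertices of $B$ are already raised, so no initial removal (and hence no initial dip) is needed; along such a sweep the count never falls below $|E(\trail \cap \M)| - 1$, which yields internal $1$-reconfigurability and simultaneously improves on the generic $2$-reconfigurability one would get by blindly invoking Lemma~\ref{lemma:oddreconfig} on $B$.

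The main obstacle is not the count but the degree \emph{upper} bounds during this add-before-remove sweep: a vertex of \trail may be visited several times, so several of its incident \N-edges could in principle be added before the matching \M-edges are removed, and since the room $b(v) - a(v)$ may be as small as $1$ this threatens the upper bound $b(v)$. I expect this to be the step that needs real care. The remedy is to keep the local imbalance between edges added and edges removed at each vertex at most one throughout, which is arranged by carrying out the sweep in the recursive, order-respecting manner of the closed-trail analysis in the proof of Lemma~\ref{lemma:em1reconfig}: the trail is split at the freshly un-tightened vertex into subtrails processed in a fixed order, and the invariant that each subtrail satisfies the relevant conditions in the \emph{current} intermediate subgraph guarantees that every individual move respects both the degree bounds and the $1$-reconfiguration budget.
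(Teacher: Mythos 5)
Your generic case and your \btight case are correct and essentially follow the paper. When \trail is neither \atight nor \btight (and not alternatingly \abtight), the paper also settles both parts by applying Lemma~\ref{lemma:em1reconfig} directly; in the \btight case it likewise removes one \M-edge first and finishes with earlier machinery (it flips the even open trail $v_1 \uline{} \ldots \uline{} v_{t-1}$ via Lemma~\ref{lemma:em1reconfig} and then adds $\edge_{t-1}$, whereas you invoke part~\ref{itm:odd1reconfig} of Lemma~\ref{lemma:oddreconfig} on the odd residual; your remark that maximality enters that lemma's proof only through the end vertices not being \btight is accurate, so the substitution is legitimate). Your argument that a \btight cycle is not internally 1-reconfigurable is also the paper's.

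The genuine gap is in the \atight case, which is the crux of part~\ref{itm:even1reconfig}. Banking the addition of $\edge_1$ is the right first move (the paper banks $\edge_{t-1}$), but your completion is not a proof: you concede that the naive add-before-remove sweep of $B = v_2 \uline{\edge_2} \ldots \uline{\edge_{t-1}} v_0 \uline{\edge_0} v_1$ can violate the upper bounds at repeated vertices, and the remedy you cite --- the closed-trail analysis in the proof of Lemma~\ref{lemma:em1reconfig} --- cannot be run on $B$: that analysis presupposes a closed trail of even length satisfying conditions~\ref{itm:reconf:btight}--\ref{itm:reconf:abtight}, while $B$ is open and of odd length, so neither it nor the even-open-trail decomposition of Lemma~\ref{lemma:em1reconfig} applies to it. The unresolved point is precisely where the one unpaired \M-edge $\edge_0$ sits in the order: it can only be removed after $v_0$ has gained an incident trail edge (in particular, removing it as the first step of the sweep is illegal at $v_0$). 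The short repair, and in substance what the paper does, is to isolate $\edge_0$ rather than sweep it: after adding $\edge_1$, apply Lemma~\ref{lemma:em1reconfig} as a black box --- it is exactly what absorbs repeated vertices --- to the \emph{even} open trail $v_2 \uline{\edge_2} \ldots \uline{\edge_{t-1}} v_0$, whose ends satisfy condition~\ref{itm:reconf:open}: $v_2$ is no longer \atight, and $v_0$ is \atight and not \abfixed, hence not \btight. Afterwards both endpoints of $\edge_0$ have degree $a(\cdot)+1$, so $\edge_0$ can be removed as the last move; the edge count never drops below $|E(\trail \cap \M)|$, giving internal 1-reconfigurability in the \atight case and completing both parts. (The paper's own write-up of this subcase --- add $\edge_{t-1}$ and apply Lemma~\ref{lemma:em1reconfig} to $R = v_0 \uline{} \ldots \uline{} v_{t-1}$ --- is terse, since that $R$ is odd as written; it should be read as the same even-trail-plus-leftover-removal split.)
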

\begin{proof}
  If the conditions~\ref{itm:reconf:btight}--~\ref{itm:reconf:abtight} of
  Lemma~\ref{lemma:em1reconfig} are satisfied then we get that $\trail \cap \M$
  is internally 1-reconfigurable to $\trail \cap \N$. Thus, to
  complete the proof it is sufficient to consider the case that one of these
  conditions is violated. In order to prove 
  statement~\ref{itm:even2reconfig} assume that \trail is alternatingly \abtight, i.e., 
  \ref{itm:reconf:abtight} is violated. Then $\trail \cap \M$ is not internally
  $k$-reconfigurable to $\trail \cap \N$ for any $k \geq 1$, since no edges of
  $\trail \cap \M$ can be removed and no edges of $\trail \cap \N$ can be added
  without violating the degree constraints. If \trail is
  \btight, i.e., \ref{itm:reconf:btight} is violated, then we choose $R = v_1
  \uline{} \ldots \uline{} v_{t-1}$ and after removing the \M-edge $\edge_0$,
  $R \cap \M$ is internally 1-reconfigurable to $R \cap \N$ by
  Lemma~\ref{lemma:em1reconfig}. Therefore, $\trail \cap \M$ is internally
  2-reconfigurable to $\trail \cap \N$. Now, suppose that \trail  
  is \atight (\ref{itm:reconf:atight} is violated), then we can add the \N-edge
  $\edge_{t-1}$ and choose $R = v_0 \uline{} \ldots \uline{} v_{t-1}$. Then, by
  Lemma~\ref{lemma:em1reconfig}, $R \cap \M$ is internally 1-reconfigurable to
  $R \cap \N$, which completes the proof of statement~\ref{itm:even2reconfig}.

  In order to prove statement~\ref{itm:even1reconfig} it is sufficient to
  reconsider the case that \trail is \btight. If this is the case then no
  \N-edge can be added after removing any \M-edge and therefore, $\trail \cap
  \M$ is not internally 1-reconfigurable to $\trail \cap \N$.
\end{proof}

\subsection{External Alternating Trail Reconfiguration}

In the following we deal with even-length alternating cycles that are either
alternatingly \abtight or \btight. The two cases are somewhat special since we
will need to consider edges that are not part of the cycles themselves. Let
$\cycle = u_0 \uline{\edge_0} \ldots \uline{\edge_{t-1}} u_t$ be an
$(\M,\N)$-alternating cycle of even length in $\M \symdiff \N$.
We will first consider the case that \cycle is \btight. 2-reconfigurability of
\cycle is established by Lemma~\ref{lemma:btightreconfig}. We generalize the
approach from~\cite[Lemma 1]{Ito:11} to \abconstrained subgraphs to obtain a
characterization of 1-reconfigurable \btight even cycles in the case that \M
and \N are maximum and no vertex of $G$ is \abtight. The proof is analogous to
that of~\cite[Lemma 1]{Ito:11} and is not given here. We
denote by $\NotA(M)  :=   \{  v \in V(G) \mid 	 \text{ $v$ is not \atight in
\M}  \}$ and $\NotB(M)  :=   \{  v \in V(G) \mid 	 \text{ $v$ is not
\btight in \M} \}$ the sets of vertices that are not \atight in \M and
		not \btight in \M, respectively.
\begin{align*}
  &\Even(M) & :=   \{ && \!\!\!\!v \in \NotA(M) \mid & \text{ There is some even-length \M-alternating} \\
  &		    &      	  &&					& \text{ $vw$-trail starting with an \M-edge} \\
  &			&			&&					& \text{ s.t.~$w \in \NotB(M)$} \} \\
  &\NotB(G) & :=   \{ && v \in V(G) \mid & \text{ There is some maximum } \M \subseteq G \text{ satisfying} \\
  &			& 		&&					& \text{ the degree constraints s.t.~} v \in \NotB(\M) \} 
  \label{eqn:evenM}
\end{align*}

The following lemma is a generalization of~\cite[Lemma 2]{Ito:11} to the
\abconstrained subgraph setting.
\begin{lemma}
  If \M is maximum then $\Even(M) = \NotB(G)$.
  \label{lemma:evennotbtight}
\end{lemma}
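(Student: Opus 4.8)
The plan is to prove the two inclusions $\Even(M) \subseteq \NotB(G)$ and $\NotB(G) \subseteq \Even(M)$ separately, with the first being routine and the second carrying the real content. For $\Even(M) \subseteq \NotB(G)$, take $v \in \Even(M)$ together with a witnessing even-length $M$-alternating trail $T$ from $v$ to some $w \in \NotB(M)$ that starts with an $M$-edge, and set $M' := M \symdiff T$. Since $T$ has even length it contains equally many $M$- and $(G-M)$-edges, so $|E(M')| = |E(M)|$ and $M'$ is again maximum. Every interior vertex of $T$ is entered by one $M$-edge and one $(G-M)$-edge, so its degree is unchanged; at $v$ the first (removed) edge is an $M$-edge, lowering $\degree{M}{v}$ by one, which is admissible because $v \in \NotA(M)$; at $w$ the last (added) edge is a $(G-M)$-edge, raising the degree by one, which is admissible because $w \in \NotB(M)$. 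Thus $M'$ is \abconstrained, maximum, and has $\degree{M'}{v} = \degree{M}{v} - 1 < b(v)$, so $v \in \NotB(M') \subseteq \NotB(G)$. (If $v = w$ then $v$ is already not \btight in $M$ and $M$ itself witnesses $v \in \NotB(G)$.)

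For the harder inclusion $\NotB(G) \subseteq \Even(M)$, fix $v \in \NotB(G)$ and choose among all maximum \abconstrained subgraphs $M'$ with $v \in \NotB(M')$ one that minimizes $|E(M \symdiff M')|$. I would analyze the two-edge-coloured subgraph $H := M \symdiff M'$, colouring the edges of $M - M'$ ``$M$-coloured'' and those of $M' - M$ ``$M'$-coloured'', and decompose $H$ into $(M,M')$-alternating trails using \atdecomp. The extremal choice of $M'$ rules out closed alternating trails: flipping any alternating cycle $C \subseteq H$ in $M'$ leaves all degrees (hence maximality and $v \in \NotB$) intact while strictly decreasing $|E(M \symdiff M')|$. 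Moreover, maximality of both $M$ and $M'$ forbids ``monochromatic-ended'' open trails: an open alternating trail whose two end-edges are both $M$-coloured would, when flipped into $M'$, add one more $M$-edge than it removes and still satisfy the bounds at its endpoints (which are not \btight in $M'$), producing a larger \abconstrained subgraph and contradicting the maximality of $M'$; symmetrically a trail with two $M'$-coloured end-edges contradicts the maximality of $M$. Hence every trail in the decomposition has one $M$-coloured and one $M'$-coloured end-edge, and therefore even length.

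It remains to locate $v$ in this decomposition. For the vertices to which the lemma is applied, namely those that are \btight in $M$, we have $\degree{M}{v} = b(v) > \degree{M'}{v}$, so at $v$ the $M$-coloured edges of $H$ strictly outnumber the $M'$-coloured ones; consequently $v$ is an endpoint of at least one alternating trail $T$, and its end-edge at $v$ is $M$-coloured. Reading $T$ from $v$, it starts with an $M$-edge, has even length by the previous paragraph, and ends at the other endpoint $w$, whose end-edge is $M'$-coloured, so $\degree{M}{w} < \degree{M'}{w} \le b(w)$ and hence $w \in \NotB(M)$. Finally $\degree{M}{v} > \degree{M'}{v} \ge a(v)$ shows $v \in \NotA(M)$, so $T$ witnesses $v \in \Even(M)$.

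I expect the main obstacle to be the trail decomposition step and the associated bookkeeping: unlike the matching case, $H$ is a general, possibly high-degree graph, so one must argue carefully that $H$ admits a decomposition into alternating trails in which the number of trail-ends incident to each vertex equals its colour imbalance $|\degree{M}{v} - \degree{M'}{v}|$ and each such end carries the majority colour. The degenerate situation in which $v$ fails to be a trail endpoint (equivalently, is already not \btight in $M$) falls outside the $M$-excess analysis above and must be accounted for separately.
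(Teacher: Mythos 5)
Your proof is correct in substance, and it reaches the lemma by a recognizably different route than the paper. The easy inclusion $\Even(M) \subseteq \NotB(G)$ matches the paper's (you are more careful, e.g.\ the aside for $v = w$). For the hard inclusion the paper's proof is a terse contradiction: it fixes an \emph{arbitrary} maximum witness \N with $v \in \NotB(N)$, supposes every even-length alternating $vw$-trail ends at a \btight vertex, and observes that a \emph{maximal} open $(\M,\N)$-alternating trail from $v$ ends at a non-\btight vertex and cannot be a cycle since $\degree{\M}{v} \neq \degree{\N}{v}$ --- leaving the parity of that trail and the endpoint analysis implicit. You instead choose the witness $\M'$ extremally, minimizing $|E(\M \symdiff \M')|$, decompose $\M \symdiff \M'$ into alternating trails, eliminate cycles by minimality, and force even length by playing the maximality of \M and $\M'$ against monochromatic-ended trails. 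Your version buys a self-contained, checkable argument whose only unproven ingredient is the decomposition fact you flag yourself (trail-ends at each vertex in number equal to the colour imbalance, each carrying the majority colour); this is standard --- pair $M$-coloured with $M'$-coloured edge-ends at every vertex and follow the transitions --- and the paper's shorter argument tacitly relies on the same fact when it reasons about maximal trails. The extremal choice of $\M'$ is not strictly necessary (since $v$ has strict $M$-majority it is a trail endpoint in \emph{any} such decomposition, and cycles carry no trail-ends), but it does tidy the structure.

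Two caveats. First, your claim that flipping \emph{any} closed alternating trail preserves all degrees is true only for even-length ones: a closed alternating trail whose two end-edges share a colour has odd length and changes the degree of its closing vertex $u$ by two. These odd closed trails are not covered by your minimality step, but they fall to the same maximality argument you apply to monochromatic-ended open trails, because the imbalance of two at $u$ supplies the needed slack ($\degree{\M'}{u} \leq \degree{\M}{u} - 2 \leq b(u) - 2$, and symmetrically with \M); you should fold them into that case explicitly rather than into the cycle case. Second, the degenerate case $v \in \NotB(M)$ that you set aside is glossed by the paper as well (its proof even concludes ``$v$ is also in $\NotB(G)$'' where $\Even(M)$ is meant); note that as literally defined $\Even(M) \subseteq \NotA(M)$, so this case requires reading the definition as allowing the empty trail with $w = v$ --- a wrinkle in the statement itself, not a defect of your argument.
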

\begin{proof}
  First, let $v \in \Even(M)$. Then there is some $vw$-alternating trail \trail
  such that $w \in \NotB(M)$. Then $v$ is not \atight and $w$ is not \btight in
  \M and therefore $\M' = \M \symdiff \trail$ is a maximum and satisfies the
  degree constraints. Since $v$ is not \btight in $\M'$ we have $v \in
  \NotB(G)$. Therefore, $\Even(M) \subseteq \NotB(G)$. In order to prove that
  $\NotB(G) \subseteq \Even(M)$ assume that $v \in \NotB(G)$. If $v \in
  \NotB(M)$ then it is also in $\NotB(G)$. Otherwise, let $\N \subseteq G$ a
  maximum degree-constrained subgraph such that $v \in \NotB(N)$. Then $v$ is
  \btight in \M, but not in \N. Suppose for a contradiction that there is no
  even-length $(\M,\N)$-alternating $vw$-trail such that $w \in \NotB(M)$. Then
  for any even-length $vw$-trail $w$ is \btight. But since \M and \N are
  maximum, in any maximal open $(\M,\N)$-alternating $vw$-trail $w$ is not
  \btight. Furthermore, \trail cannot be a cycle since $\degree{\M}{v} \neq
  \degree{\N}{v}$.
\end{proof}

\begin{lemma}
  If $G$ contains no \abfixed vertices, \M and \N are maximum, and \cycle is
  \btight then $\cycle \cap \M$ is 1-reconfigurable to $\cycle \cap \N$ if and
  only if there is a vertex $v$ of \cycle such that $v \in \Even(M)$.
  \label{lemma:btight1reconfig}
\end{lemma}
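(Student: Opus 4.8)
The plan is to prove the two implications separately. The key preliminary observation, which I would record first, is that because \cycle is \btight it is internally 2- but not internally 1-reconfigurable: no \N-edge of \cycle can be added until some vertex of \cycle ceases to be \btight, and achieving this from within \cycle costs a second edge (this is exactly the \btight case of Lemma~\ref{lemma:btightreconfig}). Hence any 1-reconfiguration of $\cycle \cap \M$ must use a move outside \cycle to create slack, and the whole question is when such external slack exists. I would also use throughout that, since \M and \N are maximum, every \abconstrained subgraph has at most $m^{*} := \min\{|E(\M)|,|E(\N)|\}$ edges; therefore every intermediate subgraph of a 1-reconfiguration has exactly $m^{*}$ or $m^{*}-1$ edges, the moves strictly alternate removal ($m^{*}\!\to\!m^{*}-1$) and addition ($m^{*}-1\!\to\!m^{*}$), and in particular the first move is a removal.

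For necessity I would argue directly. Suppose $\cycle \cap \M$ is 1-reconfigurable to $\cycle \cap \N$ and consider the first move that adds an edge $\edge = xy$ of $\cycle \cap \N$; such a move exists because $\cycle \cap \N$ is non-empty and present at the end. By the alternation above, this move takes some $(m^{*}-1)$-edge subgraph $S$ to a maximum one, and the preceding move took a maximum subgraph $S'$ to $S$ by deleting a single edge $g$, so $S = S' - g$ (the preceding move exists since $S$, having $m^{*}-1$ edges, is not the initial maximum subgraph \M). Adding $\edge$ to $S$ requires both $x$ and $y$ to be not \btight in $S$. Since $G$ is simple and $g \neq \edge$, the edge $g$ cannot be incident to both $x$ and $y$; say it is not incident to $x$. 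Then $\degree{S'}{x} = \degree{S}{x} < b(x)$, so $x$ is not \btight in the maximum subgraph $S'$, that is, $x \in \NotB(G)$. As $x$ is a vertex of \cycle, Lemma~\ref{lemma:evennotbtight} gives $x \in \Even(M)$, as required.

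For sufficiency, suppose a vertex $v$ of \cycle lies in $\Even(M)$. By definition there is an even-length \M-alternating $vw$-trail $P$ starting with an \M-edge and with $w \in \NotB(M)$; note that $v$ is not \atight (it is \btight and $a(v)<b(v)$, as $G$ has no \abfixed vertices) and that $w \notin V(\cycle)$ (every vertex of the \btight cycle \cycle is \btight, whereas $w$ is not). I would then proceed in three phases. First, reconfigure $P$ by deleting its \M-edges and inserting its $(G-\M)$-edges along the trail, with the deletion end at the non-\atight vertex $v$ and the insertion end at the non-\btight vertex $w$; by the reasoning used for open even trails in Lemma~\ref{lemma:em1reconfig} this is a 1-reconfiguration ending at the maximum subgraph $\M' = \M \symdiff P$, in which $v$ is no longer \btight while, since interior vertices of an even alternating trail keep their degree, every other vertex of \cycle remains \btight. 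Second, in $\M'$ the cycle \cycle is not \btight and, as it has length at least four and only $v$ has changed, it is not alternatingly \abtight either, so part~\ref{itm:even1reconfig} of Lemma~\ref{lemma:btightreconfig} makes $\cycle \cap \M'$ internally 1-reconfigurable to $\cycle \cap \N$. Third, reverse the first phase to restore every edge outside \cycle. Each phase dips only to $m^{*}-1$ and returns to a maximum subgraph before the next begins, so the concatenation stays at $m^{*}$ or $m^{*}-1$ throughout and witnesses that $\cycle \cap \M$ is 1-reconfigurable to $\cycle \cap \N$.

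The step I expect to be the main obstacle is the interaction between $P$ and \cycle, which the sufficiency sketch quietly assumes away: the assertions that reconfiguring $P$ leaves $\cycle \cap \M$ intact, inserts no edge of $\cycle \cap \N$ prematurely, and can be undone after \cycle is switched, all rest on $P$ being edge-disjoint from \cycle. I would secure this by taking $P$ to be a shortest even \M-alternating trail from $v$ into $\NotB(M)$ and showing it can be routed off \cycle --- or, when disjointness fails, by decomposing $P \symdiff \cycle$ into alternating subtrails and reconfiguring them in a compatible order, in the spirit of the recursive strategy of Lemma~\ref{lemma:em1reconfig}. Making this rerouting precise, together with checking that the reconfiguration of $P$ respects the degree bounds at its \btight interior vertices (which forces deletions before insertions there), is the genuine technical core; the endpoint conditions and the edge-count bookkeeping are then routine.
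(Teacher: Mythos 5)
Your necessity direction is sound, and in fact more explicit than the paper's own proof of that direction (which merely remarks that every \cycle-vertex is then ``essentially \abfixed'' and stops there). The alternation bookkeeping forced by maximality --- every intermediate subgraph has $m^{*}$ or $m^{*}-1$ edges, where $m^{*} = |E(\M)| = |E(\N)|$, so removals and additions strictly alternate --- combined with the simple-graph observation that the edge $g$ deleted just before the first insertion of a \cycle-edge $xy \in \N$ misses one of $x,y$, cleanly exhibits a maximum \abconstrained subgraph $S'$ in which a \cycle-vertex is not \btight; Lemma~\ref{lemma:evennotbtight} then converts $x \in \NotB(G)$ into $x \in \Even(M)$. (You should justify $g \neq \edge$ explicitly, but it is one line: $\edge \in \N - \M$, so $\edge \in S'$ would contradict the insertion being the first.)

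The gap is exactly where you flag it, and it is genuine: your three-phase sufficiency argument (switch $P$, switch \cycle internally via part~\ref{itm:even1reconfig} of Lemma~\ref{lemma:btightreconfig}, switch $P$ back) is a proof only when the witness trail $P$ is edge-disjoint from \cycle, and neither proposed repair works as stated. Rerouting a shortest trail off \cycle can fail: nothing in the definition of $\Even(M)$ prevents every even \M-alternating trail from $v$ into $\NotB(M)$ from being forced to run along \cycle (the \N-edges of \cycle are $(G-\M)$-edges, hence legitimate trail edges), so disjointness cannot be assumed or enforced. And ``decomposing $P \symdiff \cycle$ into alternating subtrails and reconfiguring them in a compatible order'' is precisely the unsolved part, since the resulting pieces can themselves be \btight cycles, returning you to the original problem. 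The paper's resolution is different and simpler: it does not seek a disjoint trail but \emph{truncates} the given one at its last interaction with \cycle. Let \edge be the earliest edge of \trail after which no \cycle-edges occur; the suffix beyond that point is automatically edge-disjoint from \cycle. If \edge is an \M-edge incident to a \cycle-vertex $u_0$, one reconfigures the suffix $\trail' = u_0 \uline{} \ldots \uline{} w$ (valid by Lemma~\ref{lemma:em1reconfig}, since $u_0$ is \btight hence not \atight, and $w$ is not \btight), then the now non-\btight cycle, then $\trail'$ back. If \edge is a $(G-\M)$-edge, one takes the last \cycle-edge $u_0 \uline{} u_1$ on \trail, reconfigures $\trail' = u_0 \uline{} u_1 \uline{} \ldots \uline{} w$, and then completes \cycle and undoes the off-cycle modifications in a single application of Lemma~\ref{lemma:em1reconfig} to the combined trail $w \uline{} \ldots \uline{} u_1 \uline{} u_2 \uline{} \ldots \uline{} u_t$. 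Supplying this truncation-and-interleaving argument is what your sketch is missing; the remainder of your sufficiency (the endpoint conditions, the check that after switching $P$ only $v$ changes tightness so \cycle is neither \btight nor alternatingly \abtight, and the edge-count bookkeeping) is fine.
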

\begin{proof}
  We first show that if $v \in \cycle \cap \Even(M)$ then $\cycle
  \cap \M$ is 1-reconfigurable to $\cycle \cap \N$. If $v \in \cycle \cap
  \Even(M)$ then there is a \M-alternating $vw$-trail \trail of even length
  starting with an \M-edge such that $w$ is not \btight. In the order given by
  the trail there is an earliest edge
  \edge such that none of the successors of \edge in \trail are $C$-edges. We
  distinguish two cases: First, assume that \edge is an \M-edge. Without loss
  of generality, let $u_0$ be the \cycle-vertex incident to \edge. We obtain
  the \M-alternating subtrail $\trail' = u_0 \uline{} \ldots \uline{} w$ and
  $\M' = \M \symdiff \trail$ is maximum and satisfies the degree constraints.
  Further, since $u_0$ is \btight in \M and $w$ is not \btight \M, $\trail'
  \cap \M$ is 1-reconfigurable to $\trail' \cap \M'$ by
  Lemma~\ref{lemma:em1reconfig}. Then $u_0$ is not \btight in $\M'$ and $\cycle
  \cap \M$ is 1-reconfigurable to $\cycle \cap \M'$ by
  Lemma~\ref{lemma:em1reconfig}. Since $u_0$ is not \btight and $w$ is not
  \atight in $\cycle \cap \M'$, $\trail' \cap \M'$ is 1-reconfigurable to
  $\trail' \cap \M$ and thus we can undo the changes to \M caused by the
  reconfiguration on $\trail'$.

  In the second case we assume that \edge is a $(G-\M)$-edge.  Then there is a
  latest \cycle-edge on \trail such that none the successors of \edge in \trail
  are $C$-edges.  Without loss of generality, let $\edge = u_0 \uline{} u_1$
  such that we obtain an \M-alternating subtrail $\trail' = u_0 \uline{} u_1
  \uline{} \ldots \uline{} w$.  Then $u_0$ is not \atight in $\trail' \cap \M$,
  $w$ is not \btight in $\trail' \cap \M$, and therefore $\trail' \cap \M$ is
  internally 1-reconfigurable to $\trail' - \M$ by
  Lemma~\ref{lemma:em1reconfig}. In the resulting subgraph, $u_0$ is not
  \btight and $w$ is \btight. By using Lemma~\ref{lemma:em1reconfig} again, we
  can reconfigure the remaining parts of \cycle and undo the modifications to
  $\M - \cycle$ caused by the previous step, by considering the trail $w
  \uline{} \ldots \uline{} u_1 \uline{} u_2 \uline{} \ldots \uline{} u_t (=
  u_0)$. Hence, in both cases $\cycle \cap \M$ is 1-reconfigurable to $\cycle
  \cap \N$. Note that each time we invoke Lemma~\ref{lemma:em1reconfig}, we use
  the assumption that no vertex of $G$ is \abfixed.
  
  We now show that $v \in \cycle \cap \Even(M)$ is a necessary condition for
  $\cycle \cap \M$ to be 1-reconfigurable to $\cycle \cap \N$.  If no vertex of
  \cycle is in $\Even(M) = \NotB(G)$ (Lemma~\ref{lemma:evennotbtight}), then
  each vertex $v$ of \cycle is essentially \abfixed in the sense that no
  maximum \abconstrained subgraph exists such that $v$ is not \btight.
  Therefore $\cycle \cap \M$ cannot be 1-reconfigurable to $\cycle \cap \N$.
\end{proof}

We now characterize $k$-reconfigurable alternatingly \abtight cycles in $\M
\symdiff \N$ assuming that $G$ contains no \abfixed vertices. Such cycles
cannot occur in the matching reconfiguration setting since no vertex of a cycle
is \atight in this case. There is some conceptual similarity to the proof of
Lemma~\ref{lemma:btightreconfig}, but for the purpose of proving
Theorem~\ref{thm:stdcsconn} we cannot assume that \M and \N are both maximum.
Therefore, we cannot rely on Lemma~\ref{lemma:evennotbtight}, which simplifies
the problem of finding a maximum $ab$-constrained subgraph $\M'$ such that a
certain vertex is not \btight to checking for the existence of an alternating
trail. Instead, we now check if there is some \abconstrained subgraph $\M'$
such that the tightness of the \cycle-vertices in $\M'$ differs from their
tightness in \M.  The existence of a suitable $\M'$ can be checked in
polynomial time by constructing and solving suitable \dcs instances.

\begin{lemma}
  If \cycle is alternatingly \abtight and $G$ contains no \abfixed vertices
  then $\cycle \cap \M$ is $k$-reconfigurable to $\cycle \cap \N$ for any $k
  \geq 1$ if and only if there is some \abconstrained $\M' \subseteq G$ such
  that $\cycle \cap \M = \cycle \cap \M'$ and \cycle is not alternatingly
  \abtight in $\M'$.
  \label{lemma:altabreconfig}
\end{lemma}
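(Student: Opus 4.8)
The starting point is the deadlock already recorded in the proof of Lemma~\ref{lemma:em1reconfig}: since \cycle is alternatingly \abtight in \M, every \M-edge of \cycle has an \atight endpoint and every \N-edge of \cycle has a \btight endpoint, so no edge of $\cycle \cap \M$ may be removed and no edge of $\cycle \cap \N$ may be added while the subgraph stays inside $\M \symdiff \N$. Hence any reconfiguration must leave $\M \symdiff \N$, and the role of $\M'$ is precisely to certify that some \cycle-vertex can be \emph{loosened} using edges off the cycle. The plan is to prove both directions around this observation.

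For necessity I would establish the contrapositive collapse: if no such $\M'$ exists then $\cycle \cap \M$ is not $k$-reconfigurable for any $k \geq 1$. Take any reconfiguring sequence $\M = \M_0, \M_1, \dots$ that transforms \M into $\M \symdiff \cycle$ and let $\M_i$ be the last subgraph before the first move that changes an edge of \cycle; such a move exists because source and target differ exactly on \cycle. Then $\M_i$ is \abconstrained and $\cycle \cap \M_i = \cycle \cap \M$. The move $\M_i \to \M_{i+1}$ removes an \M-edge or adds an \N-edge of \cycle, and for it to respect the degree constraints the endpoint that the alternating pattern of \cycle forces to be \atight (respectively \btight) in \M must fail to be \atight (respectively \btight) in $\M_i$. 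Hence \cycle is not alternatingly \abtight in $\M_i$, so $\M' := \M_i$ witnesses the right-hand side.

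For sufficiency, suppose $\M'$ exists. Since $\cycle \cap \M = \cycle \cap \M'$, the symmetric difference $\M \symdiff \M'$ contains no edge of \cycle, so the degree of each \cycle-vertex differs between \M and $\M'$ only through off-cycle edges. As \cycle fails to be alternatingly \abtight in $\M'$ it violates, at some vertex $u_j$, the tightness pattern it has in \M; without loss of generality $u_j$ is \atight in \M but not in $\M'$, so $\degree{\M'}{u_j} > \degree{\M}{u_j}$ (the \btight case is symmetric). I would then follow a maximal $(\M,\M')$-alternating trail $P$ in $\M \symdiff \M'$ leaving $u_j$ along an edge of $\M' - \M$ and set $\M'' := \M \symdiff P$. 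Only the two ends of $P$ change degree, and because \M and $\M'$ are both \abconstrained the far end can absorb its change; thus $\M''$ is \abconstrained, agrees with \M on \cycle, and no longer has $u_j$ \atight. Reconfiguring $P$ by elementary moves with \abconstrained intermediate subgraphs is exactly the trail reconfiguration handled by Lemma~\ref{lemma:em1reconfig}, now reading $\M'$ in place of \N.

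In $\M''$ the vertex $u_j$ is loose while the other \atight and \btight vertices of \cycle keep their type, and since \cycle has length at least four it is in $\M''$ neither \btight nor alternatingly \abtight; part~\ref{itm:even1reconfig} of Lemma~\ref{lemma:btightreconfig} then reconfigures $\cycle \cap \M''$ to $\cycle \cap \N$ internally. Because flipping \cycle leaves the degree of every \cycle-vertex unchanged, the moves of $P$ can afterwards be reversed, and $\M \symdiff P \symdiff \cycle \symdiff P = \M \symdiff \cycle$ is the desired target. The delicate part, and the one I expect to require the most care, is the edge-count accounting that yields $k = 1$ (hence every $k \geq 1$): the moves must be ordered so that the total number of edges never falls more than one below $|E(\M)|$. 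When $u_j$ is loosened by an \emph{addition} this is immediate, as the loosening raises the count before the cycle is flipped; in the \btight subcase the loosening is a \emph{removal}, and there I would begin the cycle flip with the edge of $\cycle \cap \N$ at $u_j$ --- addable precisely because $u_j$ is no longer \btight --- so that the one-edge deficit is repaid at once. Checking that this add-first ordering is compatible with the recursive decomposition behind Lemma~\ref{lemma:em1reconfig} is the main obstacle; everything else is routine verification.
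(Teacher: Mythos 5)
Your necessity argument is sound, and in fact sharper than the paper's: where the paper merely declares that every vertex of \cycle is ``essentially \abfixed'', your extraction of the witness $\M'$ as the last \abconstrained subgraph before the first move touching a \cycle-edge is exactly the right formalization, since any legal removal (addition) on \cycle needs both endpoints non-\atight (non-\btight), which contradicts the alternating pattern. The gaps are all in sufficiency, where your plan diverges structurally from the paper's. The paper never flips the off-cycle trail and the cycle in separate phases: it concatenates the $(\M,\M')$-alternating trail \trail with the \emph{entire} cycle into one open trail $R = w \uline{} \ldots \uline{} u_0 \uline{} u_1 \uline{} \ldots \uline{} u_t$ (splitting into two even open trails $R$, $S$ when \trail has odd length), flips $R$ with a single application of Lemma~\ref{lemma:em1reconfig}, and then restores \trail with a second application. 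Since every phase is an internal elementary-move 1-reconfiguration of an open even-length trail, the edge deficit never exceeds one and no internal reconfiguration of the \emph{closed} trail \cycle is ever required.

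Your modular plan (flip $P$, flip \cycle internally, flip $P$ back) breaks at precisely the points this concatenation avoids. First, your claim that \cycle is ``neither \btight nor alternatingly \abtight'' in $\M'' = \M \symdiff P$ can fail: loosening an \atight vertex by an addition can make it \btight (when $b(u_j) = a(u_j)+1$), and if \cycle has length four and the far end $w$ of $P$ happens to be the other pattern-\atight cycle vertex, \emph{all four} cycle vertices can be \btight in $\M''$. In that case the last paragraph of the proof of Lemma~\ref{lemma:btightreconfig} shows $\cycle \cap \M''$ is \emph{not} internally 1-reconfigurable, part~\ref{itm:even1reconfig} is unavailable, and Lemma~\ref{lemma:btight1reconfig} cannot rescue you because it assumes \M and \N maximum, which the paper explicitly cannot assume here; so your sequence fails to certify $k=1$. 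Second, the ordering obstacle you honestly flag is real and not repairable with the cited lemmas: when the loosening is a removal and $P$ has odd length, $|E(\M'')| = |E(\M)|-1$ and the internal cycle flip dips to $|E(\M)|-2$; prescribing that the flip begin with the addable \N-edge at $u_j$ is not something Lemmas~\ref{lemma:em1reconfig} or~\ref{lemma:btightreconfig} permit, since their recursions dictate the order of moves. (Relatedly, Lemma~\ref{lemma:em1reconfig} applies only to even-length trails, so an odd-length $P$ needs the separate $R$/$S$ split the paper performs, and you should also argue that a maximal trail from $u_j$ can be chosen \emph{open}, as the paper asserts.) All of these issues dissolve simultaneously once \cycle is threaded into the open trail; the missing idea is structural, not an accounting patch.
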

\begin{proof}
  Let us first assume that there is some \abconstrained $\M' \subseteq G$ such
  that $\cycle \cap \M = \cycle \cap \M'$ and \cycle is not alternatingly
  \abtight in $\M'$. We show that $\cycle \cap \M$ is 1-reconfigurable to
  $\cycle \cap \N$.  Since \cycle is not alternatingly \abtight in $\M'$ there
  is some vertex $v$ of \cycle that is \btight in \M but not in $\M'$, or there
  is some vertex $u$ of \cycle that is \atight in \M but not in $\M'$. We will
  consider in detail the case that there is some $v$ of \cycle that is \btight
  in \M but not in $\M'$. The other case is analogous.  Since $\degree{\M}{v} <
  \degree{\M'}{v}$ there is an open $(\M,\M')$-alternating $vw$-trail \trail in
  $\M \symdiff \M'$ starting at $v$ such that \trail cannot be extended at $w$.
  Without loss of generality, we assume that $v = u_0$ and $u_0 \uline{} u_1$ is
  a $(\cycle - \M)$-edge.

  We consider two subcases: \trail has either even or odd length.  First
  assume that \trail is even. Let $R = w \uline{} \ldots \uline{} u=u_0
  \uline{} u_1 \uline{} \ldots \uline{} u_t$.  Since \trail is even, $v=u_t$ is
  not \atight in \M and $w$ is not \btight in \M. Furthermore, $R \symdiff \M$
  satisfies the degree constraints.  By Lemma~\ref{lemma:em1reconfig}, $R \cap
  \M$ is 1-reconfigurable to $R \cap (R \symdiff \M) = ((\cycle \cap \N) + (\trail \cap
  \M'))$. As a result, $\cycle \cap \M$ has been reconfigured to $\cycle \cap
  \N$. We now need to undo the changes in $\trail = R - \cycle$ caused by the
  previous reconfiguration.  Observe that $v$ is not \btight and $w$ is not
  \atight in $\trail \cap \M'$.  Therefore, we can invoke
  Lemma~\ref{lemma:em1reconfig} again to reconfigure $\trail \cap \M'$ to
  $\trail \cap \M$.
  In the second subcase we assume that \trail is odd and let $R = w \uline{}
  \ldots \uline{} u=u_0 \uline{} u_1 \uline{} \ldots \uline{} u_{t-1}$ and let
  $S = u_{t-1} \uline{} u_0 \uline{} \ldots \uline{} w$. Then $R$ and $S$ are
  open and have even length, $w$ is not \atight and $u_{t-1}$ is not \btight in
  \M. Therefore, by Lemma~\ref{lemma:em1reconfig}, $R \cap \M$ is
  1-reconfigurable to $R \cap (R \symdiff M)$. Now $u_{t-1}$ is not \atight and
  $w$ is not \btight in $R \symdiff \M$. Therefore we can use
  Lemma~\ref{lemma:em1reconfig} again in order to reconfigure $S \cap (R
  \symdiff \M)$ to $S \cap (S \symdiff (R \symdiff \M)) = S \cap (\M - e)$. As
  a result $\cycle \cap \M$ has been reconfigured to $\cycle \cap \N$.

  In order to prove the converse statement, assume that there is no \abconstrained $\M'
  \subseteq G$ such that $\cycle \cap \M = \cycle \cap \M'$ and \cycle is not
  alternatingly \abtight in $\M'$. Then each vertex of \cycle is essentially
  \abfixed.  Therefore, $\cycle \cap \M$ is not $k$-reconfigurable to $\cycle
  \cap \N$ for any $k \geq 1$. 
\end{proof}

\subsection{Reconfiguring \abconstrained Subgraphs}

For the overall task of deciding if \M is reconfigurable to \N we will
iteratively partition $\M \symdiff \N$ into alternating trails as shown in
Algorithm~\ref{alg:atd}. 
Given \M and \N such that $\M \symdiff \N$ is non-empty, the algorithm outputs
a decomposition of $\M \symdiff \N$ into trails $T_0,\ldots,T_{i-1}$ and a list
of \abconstrained subgraphs
$\M_0,\ldots,\M_{i-1}$ for some $i \geq 0$ such $\M = \M_0$ and $\N = \M_{i-1}$
such that for each $j$, $1 \leq j < i$, $\M_{j+1} = \M_j \symdiff \trail_j$.
In each iteration $i$ we need to find an \M-augmenting $uv$-trail in $\M_i
\symdiff \N$ such that $u$ and $v$ are not \btight in $\M_i$. Since  $\M_i$ is
\abconstrained  we can use for example the technique
from~\cite[Section 2]{Shiloach:81} to reduce the problem to obtaining an
alternating path in an auxiliary graph that is constructed from $\M \symdiff
\N$. This approach produces a suitable \M-augmenting trail if it exists in
polynomial time.  
Since the number of iterations performed by the algorithm is bounded by $|E(\M
\symdiff \N)|$, the overall running time is polynomial in the size of the input
graph. We are now ready to prove the main theorem. The structure of the proof
is somewhat similar to the proof that matching reconfiguration can be solved in
polynomial time, see~\cite[Proposition 2]{Ito:11}.

\begin{algorithm}
\caption{\atdecomp}
\label{alg:atd}
% python style
%\SetStartEndCondition{ }{}{}%
%\SetKwProg{Fn}{def}{\string:}{}
%\SetKwFunction{Range}{range}%%
%\SetKw{KwTo}{to}%
%\SetKwFor{For}{for}{\string:}{}%
%\SetKwFor{ForEach}{foreach}{\string:}{}%
%\SetKwIF{If}{ElseIf}{Else}{if}{ then}{else if}{else:}{}%
%\SetKwFor{While}{while}{ do}{fintq}%
%\AlgoDontDisplayBlockMarkers\SetAlgoNoLine\SetAlgoNoEnd%\SetAlgoVlined%
\DontPrintSemicolon
\SetKwInOut{Input}{input}
\SetKwInOut{InOut}{in/out}
\SetKwInOut{Output}{output}
\SetKwData{KLIST}{$K$}
\SetKwData{PP}{$P$}
\SetKwData{HH}{$H$}
\SetKwData{TT}{$T$}
\Input{\abconstrained $\M,\N \subseteq G$ s.t.~$\M \symdiff \N$ non-empty}
\Output{Lists of alternating trails and \abconstrained subgraphs}
%\BlankLine\;
$i \longleftarrow 0;\qquad \M_0 \longleftarrow \M$\;
\While{ $\M_i \neq \N$ }
{
  Find \M-augmenting $uv$-trail \TT in $\M_i \symdiff \N$ s.t.~$u$ and $v$ are not \btight in $\M_i$\;
  \If{such \TT does not exist} {
	Let \TT be any maximal $(\M_i,\N)$-alternating trail in $\M_i \symdiff \N$\;
  }
  $\TT_i \longleftarrow \TT$\;
  $\M_{i+1} \longleftarrow \M_i \symdiff \TT$\;
  $i \longleftarrow i + 1$\; 
}
\Return $[\M_0,\ldots,\M_{i-1}], [\TT_0,\ldots,\TT_{i-1}]$\;
\end{algorithm}

\begin{theorem}
  \stdcsconn can be solved in polynomial time.
  \label{thm:stdcsconn}
\end{theorem}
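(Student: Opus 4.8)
The plan is to decide reconfigurability constructively: reduce to an instance with no \abfixed vertices, decompose $\M \symdiff \N$ into maximal $(\M,\N)$-alternating trails, reconfigure them one at a time in an order that front-loads the edge-increasing steps, and charge every intermediate subgraph against the budget $\min\{|E(\M)|,|E(\N)|\} - k$. First I would preprocess. Running Algorithm~\ref{alg:mfixed} yields the \mfixed subgraph $F$; if $(\M \symdiff \N) \cap F \neq \emptyset$ then Proposition~\ref{prop:mfixed} gives a \no-certificate valid for every $k \geq 1$, and otherwise Proposition~\ref{prop:subinstance} lets me replace the instance by its subinstance on $G - F$, so that from now on $G$ has no \abfixed vertex---exactly the standing hypothesis of the reconfiguration lemmas. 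Since reversing a move sequence shows that \M is $k$-reconfigurable to \N precisely when \N is $k$-reconfigurable to \M, I may also assume $|E(\M)| \leq |E(\N)|$ and set $m = |E(\M)|$.

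Next I would invoke Algorithm~\ref{alg:atd} to produce a sequence $\M = \M_0,\ldots,\M_{i-1} = \N$ with $\M_{j+1} = \M_j \symdiff \trail_j$, in which augmenting trails are used up before any other maximal trail, and classify each $\trail_j$ by its effect on the edge count: augmenting odd trails ($+1$), reducing odd trails ($-1$), and balanced trails---open even trails and even cycles ($0$). The key accounting observation is that an internal $c$-reconfiguration of a single $\trail_j$ leaves $\M_j - \trail_j$ untouched and hence lowers the global count only to $\min\{|E(\M_j)|,|E(\M_{j+1})|\} - c$. I would then reconfigure each trail by the appropriate earlier lemma: open even trails internally $1$-reconfigurably (Lemma~\ref{lemma:even1reconfig}), augmenting odd trails internally $1$-reconfigurably and reducing odd trails internally $2$-reconfigurably (Lemma~\ref{lemma:oddreconfig}), and generic or \btight even cycles that are not alternatingly \abtight internally $2$-reconfigurably (Lemma~\ref{lemma:btightreconfig}). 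Because augmenting steps come first and $|E(\M)| \leq |E(\N)|$, there are at least as many augmenting as reducing trails, so the running count climbs to a peak before it descends; scheduling the $2$-reconfigurable cycles in this high-count middle phase and the reducing trails last keeps $\min\{|E(\M_j)|,|E(\M_{j+1})|\} - 2 \geq m - 2$ at every step. For $k \geq 2$ this shows that the sole remaining obstruction is an alternatingly \abtight cycle, which by Lemma~\ref{lemma:altabreconfig} is reconfigurable for every $k \geq 1$ iff a suitable \abconstrained $\M'$ exists---a condition I would test by solving auxiliary \dcs instances. An unreconfigurable such cycle is a \no-certificate for all $k$, which also explains the dichotomy that failing $2$-reconfigurability already rules out $k$-reconfigurability for every $k \geq 2$.

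The delicate regime is $k = 1$, where the $2$-reconfigurable steps have essentially no slack. When \M is not maximum I would first perform augmenting reconfigurations (which cost nothing against the budget) to raise \M toward a maximum \abconstrained subgraph, creating a buffer of at least one edge that lets the subsequent reducing trails and \btight cycles be absorbed with the internal lemmas as above, provided the final step reaching \N is a balanced $1$-reconfigurable trail rather than a reducing one. The genuinely new situation is that \M and \N are both maximum of equal size, where no buffer is available and $\M \symdiff \N$ consists only of balanced trails and cycles: open even trails and non-tight cycles remain internally $1$-reconfigurable, but each \btight even cycle must now be reconfigured \emph{externally} by Lemma~\ref{lemma:btight1reconfig}, whose hypotheses---maximality of \M and \N and a cycle vertex in $\Even(\M) = \NotB(G)$ (Lemma~\ref{lemma:evennotbtight})---are exactly what this case supplies, and whose failure is a genuine \no-condition; alternatingly \abtight cycles are again governed by Lemma~\ref{lemma:altabreconfig}.

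I expect the main obstacle to be precisely this global bookkeeping, especially for $k = 1$: proving that the augment-first ordering keeps every transient edge count at or above $m - k$ while simultaneously ensuring that no $2$-reconfigurable step is ever performed at a subgraph of exactly $m$ edges, and scheduling each \btight or alternatingly \abtight cycle at a moment when the external trail required by Lemma~\ref{lemma:btight1reconfig} or Lemma~\ref{lemma:altabreconfig} is actually available and its transient modifications can be undone. The remaining work is then to assemble these local $1$- and $2$-reconfigurations into a single sequence of $O(|E|^{2})$ moves witnessing \yes, or to return one of the \no-certificates identified above.
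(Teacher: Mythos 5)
Your proposal matches the paper's proof in all essentials: the same preprocessing via Algorithm~\ref{alg:mfixed} and Propositions~\ref{prop:mfixed} and~\ref{prop:subinstance}, the same decomposition of $\M \symdiff \N$ with augment-first preference (Algorithm~\ref{alg:atd}), the same invocation of Lemmas~\ref{lemma:even1reconfig}--\ref{lemma:altabreconfig} per trail type, and the same case split on $k$ and on $|E(\M)|$ versus $|E(\N)|$. Your only deviation---creating the one-edge buffer in the equal-size non-maximum case by augmenting \M, where the paper instead augments \N to $\N'$, reconfigures \M to $\N'$ by the previous case, and finishes with one 2-reconfiguration of the \N-augmenting trail (dipping only to $|E(\M)|-1$)---is a mirror image of the paper's trick, and the bookkeeping you flag as the ``main obstacle'' is resolved in the paper exactly by the observation you already state: the preference for \M-augmenting trails guarantees that every 2-cost step is performed only when $|E(\M_i)| \geq |E(\N)|$.
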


\begin{proof}
  Let $\instance = (G', M', N', a', b',k)$ be a \stdcsconn instance. Let
  $F \subseteq G'$ be the $\M'$-fixed subgraph produced by
  Algorithm~\ref{alg:mfixed}.  If $(\M' \symdiff \N') \cap F$ is non-empty then
  some $\M'$-fixed edge needs to be reconfigured, which is impossible by
  Proposition~\ref{prop:mfixed}. Otherwise, we consider the subinstance %restrict our attention to the
  $\instance_{G'-F} = (G, M, N, a, b,k)$ (see Proposition~\ref{prop:subinstance}).  
  
  Without loss of generality we assume that $|E(\M)| \leq |E(\N)|$.  We process the
  alternating trails in $\M \symdiff \N$ output by Algorithm~\ref{alg:atd} one
  by one in the given order. During the process, we observe the following
  types of $(\M,\N)$-alternating trails: i) even-length trails that are not
  \btight or alternatingly \abtight cycles, ii) \M-augmenting trails, iii)
  \N-augmenting trails, iv) \btight even-length cycles, and v) alternatingly
  \abtight even-length cycles. Since $|E(\M)| \leq |E(\N)|$ there are at least as
  many type ii)-trails as type iii)-trails. Note that each condition in the
  lemmas~\ref{lemma:em1reconfig}--~\ref{lemma:altabreconfig} can be checked in
  polynomial time.
  We distinguish the following cases:
  %\begin{itemize}
  \paragraph{Case $k \geq 2$.}
  %\item  \emph{Case $k \geq 2$.}
	By construction, in each step $i$, if $\trail_i$ is in categories i)--iv)
	then $\trail \cap \M_i$ is 2-reconfigurable to $\trail \cap \M_{i+1}$ by
	Lemmas~\ref{lemma:even1reconfig},~\ref{lemma:oddreconfig},
	and~\ref{lemma:btightreconfig}. If $\trail_i$ is an even-length
	alternatingly \abtight cycle (type v)) then Lemma~\ref{lemma:altabreconfig}
	gives necessary and sufficient conditions under which $\trail_i \cap \M_i$
	is $k$-reconfigurable to $\trail_i \cap \M_{i+1}$ for any $k \geq 1$. These
	conditions can be checked in polynomial time and do not depend on what
	edges are present in $\M_i$ outside of $\trail_i$ The reconfigurability of
	$\trail_i$ is a property solely of $\trail_i$, $G$, and the degree bounds. 
  \paragraph{Case $k = 1$ and $|E(\M)| < |E(\N)|$.}
  %\item \emph{Case $k = 1$ and $|\M| < |\N|$.}
	Trails of types i) and ii) are 1-reconfigurable by
	Lemmas~\ref{lemma:even1reconfig} and~\ref{lemma:oddreconfig}. Due to the
	preference given to \M-augmenting trails in Algorithm~\ref{alg:atd}, if
	$\trail_i$ is an \N-augmenting trail or a \btight cycle then $|E(\M_i)| \geq
	|E(\N)|$. Therefore by Lemma~\ref{lemma:oddreconfig}
	or~\ref{lemma:btight1reconfig} we have that $\trail_i \cap \M_i$ is
	2-reconfigurable to $\M_{i+1}$, but no intermediate subgraph is of size
	less than $|E(\M)|-1$. The alternatingly \abtight cycles can be dealt with
	just as in the previous case. 
  \paragraph{Case $k = 1$ and $|E(\M)| = |E(\N)|$, both not maximum.} 
  %\item \emph{Case $k = 1$ and $|\M| = |\N|$, both not maximum.} 
	In this case we increase the size of $\N$ by one, using an \N-augmenting
	\trail, to obtain $\N'$. We first reconfigure \M to $\N'$ as in the case
	before. If successful, the result $\N'$ is 2-reconfigurable to $\N$ by
	Lemma~\ref{lemma:oddreconfig}. No intermediate subgraph is of size less
	than $|E(\M)|-1$.
  \paragraph{Case $k = 1$, $|E(\M)| = |E(\N)|$, both maximum.} %Since \M and \N are
	Since \M and \N are
	maximum, each trail $\trail_i$ is of type i), iv), or v). Therefore,
	each open trail is 1-reconfigurable by Lemma~\ref{lemma:even1reconfig}. We
	need to check the 1-reconfigurability of each cycle according to
	lemmas~\ref{lemma:btight1reconfig} (type iv)) and~\ref{lemma:altabreconfig} (type v)).
\end{proof}

The running time of the decision procedure is dominated by the time needed to
check the conditions of lemmas~\ref{lemma:btight1reconfig}
and~\ref{lemma:altabreconfig}. Overall, this amounts to solving
$O(|V(G)|\cdot|E(G)|^2)$ \dcs instances, which takes time
$O(|E(G)|^{\frac{3}{2}})$ per instance using the algorithm
from~\cite{Gabow:83}. 

\bibliographystyle{plain}
\bibliography{paper}

\end{document}